\newcommand{\nol}{\overline}
\DeclarePairedDelimiter{\mset}{\{\kern-0.75ex\{}{\}\kern-0.75ex\}}
\DeclarePairedDelimiter{\seq}{(}{)}
\DeclarePairedDelimiter{\set}{\{}{\}}
\newcommand\Top{\protect\mathpalette{\protect\rlapkern}{\top}}
\newcommand\Bot{\protect\mathpalette{\protect\rlapkern}{\bot}}
\newcommand\Path{\protect\mathpalette{\protect\rlapkern}{P}}
\def\rlapkern#1#2{\rlap{$#1#2$}\mkern2mu{#1#2}}
\newcommand{\mac}{\ensuremath{\mathbf{MAC}}}
\newcommand{\macLang}{\ensuremath{\mathcal{MAC}}}
\newcommand{\macAx}{\ensuremath{\mathfrak{MAC}}}
\newcommand{\ac}{\ensuremath{\mathbf{\rm AC}}}
\newcommand{\acAx}{\ensuremath{\mathfrak{AC}}}
\newcommand{\ite}[3]{\ensuremath{(#1 ? #2\!:\!#3)}}
\newcommand{\cond}[2]{\ensuremath{(#1\! \mid\! #2)}}
\newcommand{\expect}{E}
\newcommand{\believes}{\ensuremath{\Box}}
\newcommand{\imp}{\ensuremath{\rightarrow}}
\title{A modal aleatoric calculus for probabilistic reasoning: extended version}
\author{
  Tim French\inst{1} \and Andrew Gozzard\inst{1} \and Mark Reynolds\inst{1} 
}
\institute{
  The University of Western Australia,
  Perth, Western Australia\\
  \email{[tim.french,mark.reynolds]@uwa.edu.au}\ \email{andrew.gozzard@research.uwa.edu.au}
 }
\authorrunning{French, Gozzard, Reynolds}
\titlerunning{A Modal Aleatoric Calculus for Probabilistic Reasoning}
\begin{document}

\maketitle

\begin{abstract}
  We consider multi-agent systems where agents actions and beliefs are determined aleatorically, or ``by the throw of dice''.
  This system consists of possible worlds that assign distributions to independent random variables, 
  and agents who assign probabilities to these possible worlds.
  We present a novel syntax and semantics for such system, 
  and show that they generalise Modal Logic. 
  We also give a sound and complete calculus for reasoning in the base semantics, 
  and a sound calculus for the full modal semantics, that we conjecture to be complete.  
  Finally we discuss some application to reasoning about game playing agents.
  \keywords{probabilistic modal logic, proof theory, multi-agent systems}
\end{abstract}

\section{Introduction}
\label{sect:introduction}

This paper proposes a probabilistic generalisation of modal logic for reasoning about probabilistic multi-agent systems.
There has been substantial work in this direction before \cite{kooi.ll:2011,baltagetal.tark:2009,jfaketal.prob:2009}. 
However, here, rather than extending a propositional modal logic with the capability to represent and reason about probabilities,
we revise all logical operators so that they are interpreted probabilistically.
Thus we differentiate between {\em reasoning about probabilities} and {\em reasoning probabilistically}. 
interpreting probabilities as epistemic entities suggests a Bayesian approach \cite{BayesEp}, 
where agents assess the likelihood of propositions based on a combination of prior assumptions and observations.

We provide a lightweight logic, the {\em aleatoric calculus}, for reasoning about systems of independent random variables,
and give an extension, the {\em modal aleatoric calculus} for reasoning about multi-agent systems of random variables.
We show that this is a true generalisation of modal logic and provide some initial proof theoretic results.
The modal aleatoric calculus allows agents to express strategies in games or theories of how other agents will act,
and we present a basic demonstration of this.

\section{Related Work}\label{sec:relatedwork}
There has been significant and long-standing interest in reasoning about probability and uncertainty,
to apply the precision of logical deduction in uncertain and random environments.
Hailperin's probability logic \cite{hailperin} and Nilsson's probabilistic logic \cite{nilsson} 
seek to generalise propositional, so the semantics of true and false are replaced by probability measures. 
These approaches in turn are generalised in fuzzy logics \cite{zadeh} where real numbers are used to model degrees of truth via T-norms.
In \cite{williamson} Williamson provide an inference system based on Bayesian epistemology.

These approaches lose the simplicity of Boolean logics, as deductive systems must deal with propositions that are not independent.
This limits their practicality as well defined semantics require the conditional probabilities of all atoms to be known.
However, these approaches have been successfully combined with logic programming \cite{hommerson} and machine learning \cite{tensorlog}.
Feldman and Harel \cite{feldman} and Kozen \cite{kozen} gave a probabilistic variation of propositional dynamic logic 
for reasoning about the correctness of programs with random variables. 
Importantly, this work  generalises a modal logic (PDL) as a many valued logic. 

More general foundational work on reasoning probabilistically was done by de Finetti \cite{deFinetti} 
who established an epistemic notion of probability based on what an agent would consider to be a rational wager (the {\em Dutch book} argument).
In \cite{Milne}, Milne incorporates these ideas into the logic of conditional events. 
Stalnaker \cite{stalnaker} has also considered conditional events and has presented conditional logic \cite{stalnakerThomason}. 
Here, conditional refers to the interpretation of one proposition being contingent on another, 
although this is not quantified nor assigned a probability.

The other approach to reasoning about uncertainty is to extend traditional Boolean and modal logics
with operators for reasoning about probabilities. 
Modal and epistemic logics have a long history for reasoning about uncertainty, going back to Hintikka's work on possible worlds \cite{hintikka}.
More recent work on dynamic epistemic logic \cite{plaza:1989,hvdetal.del:2007} has looked at how agents 
incorporate new information into thier belief structures.
There are explicit probabilistic extensions of these logics, that maintain the Boolean interpretation of formulae, 
but include probabilistic terms \cite{faginHalpern,halpern}. 
Probabilistic terms are converted into Boolean terms through arithmetic comparisons.
For example, ``It is more likely to snow than it is to rain'' is a Boolean statement, whereas the likelihood of snow is a probabilistic statement.
%This is quite different to the many valued approach, where probabilities are not explicit parts of a formula, 
%but intrinsic in the semantics. 
%Halpern's book \cite{halpern} gives an excellent overview of these approaches, 
%and the representation of uncertainty in multi-agent systems based on Dempster-Shafer models of belief \cite{dempsterShafer}.

\section{Syntax and Semantics}\label{sect:semantics}

%Probabilistic uncertainty is difficult to model and hard to reason about.
%For an agent to reason about probabilities, we have the challenges of determining 
%what dependencies exists between variables, 
%what the agent knows about these dependencies, 
%and how observations effect what the agent knows about these dependencies.

We take a many-valued approach here. Rather than presenting a logic that describes what is {\em true} about a probabilistic scenario,
we present the {\em Modal Aleatoric Calculus} (\mac) for determining what is likely. The different is subtle:
In probabilistic dynamic epistemic logic \cite{kooi} it is possible to express that the statement 
``Alice thinks X has probability 0.5'' is true;
whereas the calculus here simply has a term ``Alice's expectation of X'' which may have a value that is greater than 0.5.
We present a syntax for constructing complex terms in this calculus, and a semantics for assignment values to terms, 
given a particular interpretation or model. 

\subsection{Syntax}
The syntax is given for a set of random variables $X$, and a set of agents $N$.
We also include constants $\top$ and $\bot$. The syntax of the dynamic aleatoric calculus, \mac, is as follows:
$$ \alpha ::=\ x\ |\ \top\ |\ \bot\ |\ \ite{\alpha}{\alpha}{\alpha}\ |\ \cond{\alpha}{\alpha}_i$$
where $x\in X$ is a variable and $i\in N$ is a modality. We typically take an epistemic perspective, so the modality corresponds to an agent's beliefs. 
As usual, we let $v(\alpha)$ refer to the set of variables that appear in $\alpha$.
We refer to $\top$ as {\em always} and $\bot$ as {\em never}. 
The {\em if-then-else} operator $\ite{\alpha}{\beta}{\gamma}$ is read {\em if $\alpha$ then $\beta$ else $\gamma$} 
and uses the ternary conditional syntax of programming languages such as C.
The {\em conditional expectation} operator $\cond{\alpha}{\beta}_i$ is {\em modality $i$'s expectation of $\alpha$ given $\beta$} 
(the conditional probability $i$ assigns to $\alpha$ given $\beta$).

\subsection{Semantics}\label{semantics}
The modal aleatoric calculus is interpreted over {\em probability models} similar to the 
probability structures defined in \cite{halpern}, although they have random variables in place of propositional assignments.

\begin{definition}\label{def:PD}
  Given a set $S$, we use the notation $PD(S)$ to notate the set of {\em probability distributions} over $S$, where $\mu\in PD(S)$ implies:
  $\mu:S\longrightarrow[0,1]$; and either $\Sigma_{s\in S} \mu(s) = 1$, or $\Sigma_{s\in S}\mu(s) = 0$.
  In the latter case, we say $\mu$ is the {\em empty distribution}.
%works fine for infinite sets, but we insist that possible implies non-zero probability.
\end{definition}

% change to P = (W,pi,V)???
\begin{definition}\label{def:pum}
  Given a set of variables $X$ and a set of modalities $N$, a {\em probability model} is specified by the tuple $P =(W, \pi, f)$, 
  where:
  \begin{itemize}
    \item $W$ is a set of possible worlds.
    \item $\pi:N\longrightarrow W\longrightarrow PD(W)$ assigns for each world $w\in W$ and each modality $i\in N$, 
      a probability distribution $\pi_i(w)$ over $W$.
      We will write $\pi_i(w,v)$ in place of $\pi(i)(w)(v)$.
    \item $f:W\longrightarrow X\longrightarrow[0,1]$ is a probability assignment so for each world $w$, for each variable $x$, 
      $f_w(x)$ is the probability of $x$ being true.
  \end{itemize}
  Given a model $P$ we identify the corresponding tuple as $(W^P, \pi^P, f^P)$. 
  A {\em pointed probability model}, $P_w = (W, \pi,f,w)$, specifies a world in the model as the point of evaluation.
\end{definition}

We note that we have not placed any restraints on the function $\pi$. 
If $\pi$ were to model agent belief we might expect all worlds in the probability distribution $\pi_i(w)$ to share the same probability distribution of worlds.
However, at this stage we have chosen to focus on the unconstrained case.

Given a pointed model $P_w$, the semantic interpretation of a \mac\ formula $\alpha$ is $P_w(\alpha)\in [0,1]$ 
which is the expectation of the formula being supported by a sampling of the model,
where the sampling is done with respect to the distributions specified by $\pi$ and $f$.
\begin{definition}\label{def:semantics}
  The {\em semantics of the modal aleatoric calculus} take a pointed probability model, $f_w$, 
  and a proposition defined in $\mac$, $\alpha$,  and calculate the {\em expectation of $\alpha$ holding at $P_w$}.
  Given an agent $i$, a world $w$ and a $\macLang$ formula $\alpha$, we define {\em $i$'s expectation of $\alpha$ at $w$} as
  $$E^i_w(\alpha) = \sum_{u\in W}\pi_i(w,u)\cdot P_u(\alpha).$$
  Then the semantics of $\mac$ are as follows:
  $$
  \begin{array}{rcl}
    P_w(\top) &=& 1  \qquad
    P_w(\bot)\ =\ 0  \qquad
    P_w(x)\ =\ f_w(x) \\ 
    P_w(\ite{\alpha}{\beta}{\gamma}) &=& P_w(\alpha)\cdot P_w(\beta)+(1-P_w(\alpha))\cdot P_w(\gamma)\\
    P_w(\cond{\alpha}{\beta}_i) &=& \frac{E^i_w(\alpha\land\beta)}{E^i_w(\beta)}\ 
       \textrm{if}\ E^i_w(\beta)>0\ \textrm{and}\ 1\ \textrm{otherwise}\\
  \end{array}
  $$
  We say two formulae, $\alpha$ and $\beta$, a {\em semantically equivalent} (written $\alpha\cong\beta$) if for all pointed probability models $P_w$ we have $P_w(\alpha) = P_w(\beta)$.
\end{definition}

The concept of {\em sampling} is intrinsic in the rational of these semantics. 
The word {\em aleatoric} has its origins in the Latin for dice-player ({\em aleator}), 
and the semantics are essentially aleatoric, in that they use dice (or sample probability distributions) for everything.
If we ask whether a variable is true at a world, the variable is sampled according to the probability distribution at that world.
Likewise, to interpret a modality the corresponding distribution of worlds is sampled, and the formula is evaluated at the selected world.
However, we are not interested in the result of any one single sampling activity, but in the {\em expectation} derived from the sampling activity.

This gives us an interesting combination approaches for understanding probability. 
Aleatoric approaches appeal to frequentist interpretations of probability, 
where likelihoods are fixed and assume arbitrarily large sample sizes. 
This contrasts the Bayesian approach where probability is intrinsically epistemic, 
where we consider what likelihood an agent would assign to an event, given the evidence they have observed.
Our approach can be seen as an aleatoric implementation of a Bayesian system. By this we mean that:
\begin{itemize}
  \item Random variables are {\em Aleatoric}, always sampled from a fixed distribution.
  \item Modalities are {\em Bayesian}, conditioned on a set of possible worlds.
  \item Possible worlds are {\em Aleatoric}, sampled from a probability distribution.
\end{itemize}

We can imagine many different scenarios this way, and in some scenarios $\alpha$ is true, and in some scenarios, $\alpha$ is false.
The aleatoric calculus determines the likelihood of a formula being true in this sampling process.
What is convenient about this process is that every sampling event is independent of every other sampling event. 
Consequently, two subformulae can be evaluated independently, allowing a simple mathematical interpretation for the formulae.
We also restrict agents reasoning capability solely to this sampling process. 
We cannot compare probabilities directly to, say, reason that $\alpha$ is twice as likely as $\beta$.
However,  we express that an agent considers a formula is impossible (i.e. $\cond{\bot}{\alpha}_i$ is true).
and we can repeatedly sample formulae to get better approximations 
(e.g. $(\beta\rightarrow\alpha)^{n/2n}$ tends to 0 with $n$ if and only if $\alpha$ is less likely than $\beta$).

With the concept of agents sampling their mental model of the universe in mind, 
we see that sampling $\top$ {\em always} returns true, and sampling $\bot$ {\em never} returns true.
Sampling a random variable will return true in line with the probability given by $f$. 
The variable will be resampled every time it appears in formula, 
so the expectation for $x\land x$ is typically {\em not} the same as the expectation for $x$.
Therefore a random variable with probability greater than 0 and less than 1 should only model an independent stochastic event, 
like flipping a coin.
However, if a game involves flipping a coin and leaving it hidden whilst players bet on its orientation, 
we should not consider each reference to the coin's orientation as an independent event. 
Rather, we should consider two possible worlds: one where the coin is heads; and one where the coin is tails.

The if-then-else operator, $\ite{\alpha}{\beta}{\gamma}$, can be imagined as a sampling protocol. 
We first sample $\alpha$, and if $\alpha$ is true,
we proceed to sample $\beta$ and otherwise we sample $\gamma$. 
%We cannot simplify this process to two independent sampling activities: one of $\alpha\land\beta$ and one of $\lnot\alpha\land\gamma$, 
%since $\alpha$ is sampled twice, and may give different results each time. 
We imagine an evaluator interpreting the formula by flipping a coin:
if it lands heads, we evaluate $\beta$; if it lands tails, we evaluate $\gamma$.
This corresponds to the additive property of Bayesian epistemology: 
{\em if $A$ and $B$ are disjoint events, then $P(A\ \textrm{or}\ $B$) = P(A)+P(B)$}\cite{BayesEp}.
Here the two disjoint events are $\alpha$ {\em and} $\beta$ and $\lnot\alpha${\em and} $\gamma$, 
but disjointedness is only guaranteed if $\alpha$ and $\lnot\alpha$ are evaluated from the same sampling. 

The conditional expectation operator $\cond{\alpha}{\beta}_i$ expresses modality $i$'s expectation of $\alpha$ {\em marginalised} by the expectation of $\beta$.
This is, as in the Kolmogorov definition of conditional probability, 
$i$'s expectation of $\alpha\land\beta$ divided by $i$'s expectation of $\beta$.
The intuition for these semantics corresponds to a sampling protocol. 
The modality $i$ samples a world from the probability distribution and samples $\beta$ at that world. 
If $\beta$ is true, then $i$ samples $\alpha$ at that world and returns the result. 
Otherwise agent $i$ resamples a world from their probability distribution, and repeats the process.
In the case that $\beta$ is never true, we assign $\cond{\alpha}{\beta}_i$ probability 1, as being vacuously true.

%A simpler approach would be to define an expectation operator, 
%$E_i\alpha$ that is the expectation $\alpha$ is true, 
%in a world selected from the distribution of $i$.
%However, the conditional expectation operator is more expressive, 
%and appears to be required for the axiomatization.

\paragraph{Abbreviations:}
Some abbreviations we can define in $\macLang$ are as follows:
$$
\begin{array}{c}
  \alpha\land\beta = \ite{\alpha}{\beta}{\bot} \qquad  
  \alpha\lor\beta = \ite{\alpha}{\top}{\beta} \qquad
  \alpha\imp\beta  = \ite{\alpha}{\beta}{\top} \qquad
  \lnot\alpha  =  \ite{\alpha}{\bot}{\top}\\
  \expect_i\alpha  =  \cond{\alpha}{\top}_i \qquad 
  \believes_i \alpha = \cond{\bot}{\lnot\alpha}_i\\
  \alpha^{\frac{0}{b}} = \top \qquad 
  \alpha^{\frac{a}{b}} = \bot\ \textrm{if}\ b<a\qquad 
  \alpha^{\frac{a}{b}} =\ite{\alpha}{\alpha^{\frac{a-1}{b-1}}}{\alpha^{\frac{a}{b-1}}}\ \textrm{if}\ a\leq b
\end{array}
$$
where $a$ and $b$ are natural numbers. 
We will show later that under certain circumstances these operators do correspond with their Boolean counterparts.
However, this is not true in the general case. 
The formula $\alpha\land\beta$ does not interpret directly as $\alpha$ is true and $\beta$ is true.
Rather it is the likelihood of $\alpha$ being sampled as true, followed by $\beta$ being sampled as true.
For this reason $\alpha\land\alpha$ is {\em not} the same as $\alpha$.
Similarly $\alpha\lor\beta$ is the likelihood of $\alpha$ being sampled as true, 
or in the instance that it was not true, that $\beta$ was sampled as true.

The modality $\expect_i \alpha$ is agent $i$'s expectation of $\alpha$ being true, 
which is just $\alpha$ conditioned on the uniformly true $\top$.
The operator $\believes_i\alpha$ corresponds to the necessity operator of standard modal logic, 
and uses a property of the conditional operator:
it evaluates $\cond{\alpha}{\beta}_i$ as vacuously true if and only if there is no expectation that $\beta$ can ever be true.
Therefore, $\cond{\bot}{\lnot\alpha}_i$ can only be true if modality $i$ always expects $\lnot\alpha$ to be false, 
and thus for the modality $i$, $\alpha$ is necessarily true.
The formula $\alpha^{\frac{a}{b}}$ allows us to explicitly represent degrees of belief in the language.
It is interpreted as {\em $\alpha$ is true at least $a$ times out of $b$}.
Note that this is not a statement saying what the frequency of $\alpha$ is. Rather it describes the event of $\alpha$ being true $a$ times out of $b$.
Therefore, if $\alpha$ was unlikely (say true 5\% of the time) then $\alpha^{\frac{9}{9}}$ describes an incredibly unlikely event.
%This allows us to encode degrees of belief, which is seen most clearly in the context of the if-then-else operator: 
%$(\alpha^{\frac{4}{5}}?\beta:\gamma)$ represents ``if $\alpha$ is very likely, then return the expectation for $\beta$, 
%otherwise give the expectation for $\gamma$'', where the $\alpha$ being ``very likely''  is 
%the expectation that $\alpha$ will be true in 4 out of 5 times.

\subsection{Example}
We will give simple example of reasoning in \mac.
Suppose we have an aleator (dice player), considering the outcome of a role of a die. 
While the dice is fair, our aleator does not know whether it is a four sided die or a six sided die.
We consider a single proposition: $p_1$ if the result of throw of the die is 1.
The aleator considers two worlds equally possible: $w_4$ where the die has four sides, and $w_6$ where the die has $6$ sides.
The probability model $P = (W, \pi, f)$ is depicted in Figure~\ref{dice}:
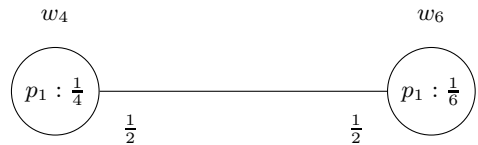
\begin{figure}
    \begin{center}
  \begin{tikzpicture}
      \draw (0,0) node[circle,draw](w4) {$p_1:\frac{1}{4}$};
      \draw (0,1) node {$w_4$};
      \draw (5,0) node[circle,draw](w6) {$p_1:\frac{1}{6}$};
      \draw (5,1) node {$w_6$};
      \draw (w4)--(w6);
      \draw (1,-0.5) node {$\frac{1}{2}$};
      \draw (4,-0.5) node {$\frac{1}{2}$};
  \end{tikzpicture}
    \end{center}
    \caption{A probability model for an aleator who does not know whether the die is four sided ($w_4$) or six sided ($w_6$).}\label{dice} 
\end{figure}
  We can formulate properties such as ``at least one of the next two throws will be a 1'': $p_1^{\frac{1}{2}} = \ite{p_1}{\top}{p_1}$. 
  We can calculate $P_{w_4}(p_1^{\frac{1}{2}}) = \frac{7}{16}$, while $P_{w_6}(p_1^{\frac{1}{2}}) = \frac{11}{36}$.
  Now if we asked our aleator what are the odds of rolling a second 1, given the first roll was 1, we would evaluate the formula
  $\cond{p_1}{p_1}_a$ (where $a$ is our aleator), and in either world this evaluates to $\frac{5}{24}$.
  Note that this involves some speculation from the aleator.

\section{Axioms for the modal aleatoric calculus}\label{axioms}

Having seen the potential for representing stochastic games, we will now look at some reasoning techniques.
First we will consider some axioms to derive constraints on the expectations of propositions, 
as an analogue of a Hilbert-style proof system for modal logic.
In the following section we will briefly analyse the model checking problem, as a direct application of the semantic definitions. 

Our approach here is to seek a derivation system that can generate equalities that are always valid in \mac. 
For example, $\alpha\land\beta\simeq\beta\land\alpha$ will be satisfied by every world of every model.
We use the relation $\simeq$ to indicate that two formulae are equivalent in the calculus, 
and the operator $\cong$ to indicate the expectation assigned to each formula will be equal in all probability models.
We show that the calculus is sound, and sketch a proof of completeness in the most basic case.  

\subsection{The aleatoric calculus}
The {\em aleatoric calculus}, \ac, is the language of $\top,\ \bot,\ x$ and $\ite{\alpha}{\beta}{\gamma}$, where $x\in X$. 
The interpretation of this fragment only depends on a single world and it is the 
analogue of propositional logic in the non-probabilistic setting. 
The axioms of the calculus are:
$$\begin{array}{lrcllrcl}
  {\bf id}& x &\simeq& x  &\qquad
  {\bf vacuous}&\ite{x}{\top}{\bot} &\simeq& x\\%Thanks Gozz
  {\bf ignore}& \ite{x}{y}{y} &\simeq& y  &\qquad
  {\bf tree}&\ite{\ite{x}{y}{z}}{p}{q} &\simeq& \ite{x}{\ite{y}{p}{q}}{\ite{z}{p}{q}}\\
  {\bf always}& \ite{\top}{x}{y} &\simeq& x  &\qquad
  {\bf swap}&\ite{x}{\ite{y}{p}{q}}{\ite{y}{r}{s}} &\simeq& \ite{y}{\ite{x}{p}{r}}{\ite{x}{q}{s}}\\
  {\bf never}& \ite{\bot}{x}{y} &\simeq& y  &&&&
\end{array}
$$
We also have the rules of transitivity, symmetry and substitution for $\simeq$:
$$
\begin{array}{l}
  \mathbf{Trans}:\qquad \textrm{If }\alpha\simeq\beta\textrm{ and }\beta\simeq\gamma\textrm{ then }\alpha\simeq\gamma\\ 
  \mathbf{Sym}:\qquad \textrm{If } \alpha\simeq \beta\textrm{ then } \beta\simeq\alpha\\
  \mathbf{Subs}:\qquad \textrm{If } \alpha \simeq \beta\textrm{ and } \gamma\simeq\delta
  \textrm{ then } \alpha[x\backslash\gamma] \simeq \beta[x\backslash\delta]
\end{array}
$$
where $\alpha[x\backslash\gamma]$ is $\alpha$ with every occurrence of the variable $x$ replaced by $\gamma$.
We let this system of axioms and rules be referred to as \acAx.

As an example of reasoning in this system, we will show that the commutativity of $\land$ holds:
$$
\begin{array}{rcll}
    \ite{x}{y}{\bot} & \simeq & \ite{x}{\ite{y}{\top}{\bot}}{\ite{y}{\bot}{\bot}} & \textbf{vacuous}, \textbf{ignore} \\
    & \simeq & \ite{y}{\ite{x}{\top}{\bot}}{\ite{x}{\bot}{\bot}} & \textbf{swap} \\
    & \simeq & \ite{y}{x}{\bot} & \textbf{vacuous}, \textbf{ignore} \\
\end{array}
$$

The axiom system \acAx\ is sound.
The majority of these axioms are simple to derive from Definition~\ref{def:semantics}, 
and all proofs essentially show that the semantic interpretation of the left and right side of the equation are equal.
The rules {\bf Trans} and {\bf Sym} come naturally with equality, 
and the rule {\bf Subs} follows because at any world, all formulae are probabilistically independent.

We present arguments for the soundness of the less obvious \textbf{tree} and \textbf{swap} below. 
As the aleatoric calculus is only interpreted with respect to a single world, 
we will drop the subscript from the evaluation function $P_w$. 
We will also use the abbreviation $\lnot\alpha$ for $\ite{\alpha}{\bot}{\top}$ noting $P(\lnot\alpha) = 1-P(\alpha)$.

\begin{lemma} \label{lem:not_distributes}
      $P(\lnot{\ite{x}{\alpha}{\beta}}) = P(\ite{x}{\lnot{\alpha}}{\lnot{\beta}})$
\end{lemma}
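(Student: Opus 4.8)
The plan is to expand both sides of the claimed equality directly, using the semantic clause for the if-then-else operator from Definition~\ref{def:semantics} together with the observation, already recorded in the text, that $P(\lnot\gamma) = 1 - P(\gamma)$ for any formula $\gamma$ (which itself follows by evaluating $\ite{\gamma}{\bot}{\top}$). Since the aleatoric fragment is interpreted at a single world, I drop the subscript and work throughout with the evaluation function $P$.

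First I would rewrite the left-hand side. Because $\lnot\delta = \ite{\delta}{\bot}{\top}$, we have $P(\lnot{\ite{x}{\alpha}{\beta}}) = 1 - P(\ite{x}{\alpha}{\beta})$, and applying the if-then-else clause gives $1 - \bigl(P(x)\cdot P(\alpha) + (1-P(x))\cdot P(\beta)\bigr)$. Next I would expand the right-hand side: applying the if-then-else clause to $\ite{x}{\lnot\alpha}{\lnot\beta}$ yields $P(x)\cdot P(\lnot\alpha) + (1-P(x))\cdot P(\lnot\beta)$, and substituting $P(\lnot\alpha) = 1-P(\alpha)$ and $P(\lnot\beta) = 1-P(\beta)$ gives $P(x)(1-P(\alpha)) + (1-P(x))(1-P(\beta))$.

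The final step is a routine algebraic simplification: expanding the right-hand side collapses the two $P(x)$-weighted constant terms to $P(x) + (1-P(x)) = 1$, leaving $1 - P(x)P(\alpha) - (1-P(x))P(\beta)$, which is exactly the left-hand side. Hence the two expectations agree at the world of evaluation, establishing the equality. I should note that there is no genuine obstacle here; the content is essentially a one-line computation. The only point requiring (minor) care is to apply the negation abbreviation uniformly and to observe that the guard $x$ is untouched by the outer negation on the left, so that the weights $P(x)$ and $1-P(x)$ are identical on both sides and only the two branches get complemented.
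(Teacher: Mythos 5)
Your proposal is correct and follows essentially the same route as the paper: a direct expansion of both sides via the semantic clause for if-then-else together with $P(\lnot\gamma)=1-P(\gamma)$, followed by routine algebra. The only cosmetic difference is that the paper manipulates the left-hand side into the right-hand side in one chain of equalities, whereas you expand both sides separately and observe they coincide; the computation is identical.
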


\begin{proof}
  $$
  \begin{array}{rl}
      P(\lnot{\ite{x}{y}{z}}) &= 1 - P(\ite{x}{y}{z}) \\
          &= 1 - P(x) P(y) - P(\lnot{x}) P(z) \\
          &= 1 - P(x) (1 - P(\lnot{y})) - P(\lnot{x}) (1 - P(\lnot{z})) \\
          &= 1 - P(x) + P(x) P(\lnot{y}) - P(\lnot{x}) + P(\lnot{x}) P(\lnot{z})) \\
          &= P(x) P(\lnot{y}) + P(\lnot{x}) P(\lnot{z})) \\
          &= P(\ite{x}{\lnot{y}}{\lnot{z}})
    \end{array}
    $$
\end{proof}

\begin{lemma}\label{lem:treeSound}
      The \textbf{tree} axiom is sound with respect to $\simeq$.
\end{lemma}

\begin{proof}
    By Definition~\ref{def:semantics} and Lemma~\ref{lem:not_distributes}, we can show that
    $$\begin{array}{rl}
      &P(\ite{\ite{x}{y}{z}}{p}{q})\\ 
      &= P(\ite{x}{y}{z}) P(p) + P(\lnot{\ite{x}{y}{z}}) P(q) \\
      &= P(\ite{x}{y}{z}) P(p) + P(\ite{x}{\lnot{y}}{\lnot{z}}) P(q) \\
      &= P(x) P(y) P(p) + P(\lnot{x}) P(z) P(p) + 
         P(x) P(\lnot{y}) P(q) + P(\lnot{x}) P(\lnot{z}) P(q) \\
      &= P(x) (P(y) P(p) + P(\lnot{y}) P(q)) + 
         P(\lnot{x}) (P(z) P(p) + P(\lnot{z}) P(q)) \\
      &= P(x) P(\ite{y}{p}{q}) + P(\lnot{x}) P(\ite{z}{p}{q}) \\
      &= P(\ite{x}{\ite{y}{p}{q}}{\ite{z}{p}{q}})
    \end{array}
    $$
    for any $P : X \rightarrow [0, 1]$. 
    By Definition~\ref{def:semantics} this implies
   $$
   \ite{\ite{x}{y}{z}}{p}{q} \simeq \ite{x}{\ite{y}{p}{q}}{\ite{z}{p}{q}}
   $$
   which is the \textbf{tree} axiom.
\end{proof}

\begin{lemma}\label{lem:swapSound}
      The \textbf{swap} axiom is sound with respect to $\simeq$.
\end{lemma}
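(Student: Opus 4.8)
The plan is to mirror the strategy used for the \textbf{tree} axiom in Lemma~\ref{lem:treeSound}: expand both sides of the equation using the if-then-else clause of Definition~\ref{def:semantics} until each side is written as a single sum over the four leaf formulae $p$, $q$, $r$, $s$, and then observe that the two expansions coincide. As in that proof I would drop the world subscript and freely use $P(\lnot\alpha) = 1 - P(\alpha)$.

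First I would expand the left-hand side $\ite{x}{\ite{y}{p}{q}}{\ite{y}{r}{s}}$ by applying the semantic clause to the outer conditional on $x$, giving $P(x)\cdot P(\ite{y}{p}{q}) + P(\lnot x)\cdot P(\ite{y}{r}{s})$, and then expanding each inner conditional on $y$. This yields the four-term sum
$$P(x)P(y)P(p) + P(x)P(\lnot y)P(q) + P(\lnot x)P(y)P(r) + P(\lnot x)P(\lnot y)P(s).$$
Next I would expand the right-hand side $\ite{y}{\ite{x}{p}{r}}{\ite{x}{q}{s}}$ in the same way, now peeling off the conditional on $y$ first and then the conditionals on $x$, obtaining
$$P(y)P(x)P(p) + P(\lnot y)P(x)P(q) + P(y)P(\lnot x)P(r) + P(\lnot y)P(\lnot x)P(s).$$
Matching the two sums leaf by leaf — the coefficient of $P(p)$, of $P(q)$, of $P(r)$, and of $P(s)$ — shows they are term-for-term equal after reordering factors, so the two expressions agree by commutativity of multiplication. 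By Definition~\ref{def:semantics} this gives $\ite{x}{\ite{y}{p}{q}}{\ite{y}{r}{s}} \simeq \ite{y}{\ite{x}{p}{r}}{\ite{x}{q}{s}}$, which is the \textbf{swap} axiom.

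There is no genuine obstacle here beyond bookkeeping: the content of the axiom is precisely that the two different orders in which one may branch on $x$ and on $y$ induce the same distribution over the four leaves, which at the level of the semantics reduces to commutativity and associativity of real multiplication. The only point requiring a little care is ensuring that each leaf is paired with the matching product of factors from $\{P(x),P(\lnot x)\}$ and $\{P(y),P(\lnot y)\}$ on both sides; once the expansions are laid out in a fixed leaf order, as above, this correspondence is immediate.
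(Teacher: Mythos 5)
Your proposal is correct and follows essentially the same route as the paper: expand both sides via the if-then-else clause of Definition~\ref{def:semantics} into the four products over the leaves $p,q,r,s$ and observe they coincide by commutativity of multiplication (the paper expands the left side and refactors it into the right side, which is the same computation presented in one direction). No gaps.
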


\begin{proof}
  By Definition~\ref{def:semantics}, we can show that
  $$
  \begin{array}{rl}
    &P(\ite{x}{\ite{y}{z}{p}}{\ite{y}{r}{z}})\\ 
    &= P(x)P(\ite{y}{z}{p}) + P(\lnot{x})P(\ite{y}{r}{z}) \\
    &= P(x)P(y)P(z) + P(x)P(\lnot{y})P(p) + P(\lnot{x})P(y)P(r) + 
       P(\lnot{x})P(\lnot{y})P(z) \\
    &= P(y)(P(x)P(z) + P(\lnot{x})P(r)) + P(\lnot{y})(P(x)P(p) + P(\lnot{x})P(z)) \\
    &= P(y)P(\ite{x}{z}{r}) + P(\lnot{y})P(\ite{x}{p}{z}) \\
    &= P(\ite{y}{\ite{x}{z}{r}}{\ite{x}{p}{z}})
  \end{array}
  $$
  for any $P : X \rightarrow [0, 1]$. By \ref{def:semantics} this implies
  $$
  \ite{x}{\ite{y}{z}{p}}{\ite{y}{r}{z}} \simeq \ite{y}{\ite{x}{z}{r}}{\ite{x}{p}{z}}
  $$
  which is the \textbf{swap} axiom.
\end{proof}

\begin{lemma}\label{bacAx-sound}
  The axiom system \acAx\ is sound for \ac.
\end{lemma}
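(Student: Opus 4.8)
The plan is to argue by induction on the structure of derivations in \acAx. A derivation is built from the seven axioms by finitely many applications of \textbf{Trans}, \textbf{Sym} and \textbf{Subs}, so it suffices to show that every axiom is sound (the base case) and that each rule preserves soundness (the inductive step). Throughout, I read soundness of an equation $\alpha\simeq\beta$ as $P(\alpha)=P(\beta)$ for every assignment $P:X\to[0,1]$, that is, $\alpha\cong\beta$; recall that in \ac\ the evaluation collapses to a single world, so we drop the subscript on $P_w$.

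For the base case, \textbf{tree} and \textbf{swap} have already been discharged in Lemma~\ref{lem:treeSound} and Lemma~\ref{lem:swapSound}. The remaining axioms \textbf{id}, \textbf{vacuous}, \textbf{ignore}, \textbf{always} and \textbf{never} I would verify by a single expansion of Definition~\ref{def:semantics} each; for instance $P(\ite{x}{\top}{\bot}) = P(x)\cdot 1 + (1-P(x))\cdot 0 = P(x)$ gives \textbf{vacuous}, and $P(\ite{x}{y}{y}) = P(x)P(y)+(1-P(x))P(y) = P(y)$ gives \textbf{ignore}, with \textbf{always}, \textbf{never} and \textbf{id} being equally immediate.

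For the inductive step, \textbf{Trans} and \textbf{Sym} are inherited directly from transitivity and symmetry of equality on the reals, so they preserve $\cong$ with no further work. The real content is the soundness of \textbf{Subs}. To handle it I would first isolate a substitution lemma: for every \ac\ formula $\alpha$, variable $x$, formula $\gamma$ and assignment $P$,
$$P(\alpha[x\backslash\gamma]) = \hat{P}(\alpha),$$
where $\hat{P}$ agrees with $P$ except that it assigns the value $P(\gamma)\in[0,1]$ to $x$. This I would prove by structural induction on $\alpha$: the cases $\top,\bot$ and $y\neq x$ are immediate, the case $\alpha=x$ is exactly $P(\gamma)=\hat P(x)$, and the case $\alpha=\ite{\beta_1}{\beta_2}{\beta_3}$ follows because Definition~\ref{def:semantics} computes the value of an if-then-else as a fixed arithmetic combination of the values of its immediate subformulas, to which the induction hypothesis applies. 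Granting the lemma, \textbf{Subs} is short: if $\alpha\cong\beta$ and $\gamma\cong\delta$, then for every $P$ we have $P(\gamma)=P(\delta)$, so $\hat P$ is the same modified assignment on both sides; since $\alpha\cong\beta$ holds at every assignment, in particular at $\hat P$, we get $\hat P(\alpha)=\hat P(\beta)$, and the substitution lemma then yields $P(\alpha[x\backslash\gamma]) = P(\beta[x\backslash\delta])$.

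I expect the substitution lemma to be the only genuine obstacle, and the delicate point there is that $x$ may occur several times in $\alpha$, so that $\alpha[x\backslash\gamma]$ contains several independent copies of $\gamma$. This is harmless precisely because the \ac\ semantics resamples every occurrence of every variable independently (the property stressed after Definition~\ref{def:semantics}): the value of a formula depends only on the values of its immediate subformulas, never on any correlation between repeated occurrences, so the two copies of $\gamma$ and the two notional samples of $\hat P(x)$ contribute in exactly the same way. This compositionality is what makes the structural induction go through, and it is the formal content of the paper's remark that \textbf{Subs} holds ``because at any world, all formulae are probabilistically independent''.
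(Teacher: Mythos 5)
Your proposal is correct and follows the same route as the paper: the paper likewise discharges \textbf{tree} and \textbf{swap} via Lemmas~\ref{lem:treeSound} and~\ref{lem:swapSound}, checks the remaining axioms directly against Definition~\ref{def:semantics}, and justifies \textbf{Subs} by the probabilistic independence of subformula evaluations. Your explicit substitution lemma (evaluating $\alpha[x\backslash\gamma]$ under $P$ equals evaluating $\alpha$ under the assignment sending $x$ to $P(\gamma)$) is simply a careful formalisation of the one-sentence independence remark the paper offers for that rule.
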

\begin{proof}
  This follows from Lemmas~\ref{lem:treeSound}~and~\ref{lem:swapSound} and Definition~\ref{semantics}.
  Also, from the semantics we can see that the interpretation of subformulae are independent of one another, 
  so the substitution rule holds, and the remaining rules follow directly from the definition of $\simeq$.
\end{proof}

To show that \acAx\ complete for the aleatoric calculus, 
we aim to show that any aleatoric calculus formula that are semantically equivalent
can be transformed into a common form. 
As the axioms of \acAx\ are equivalences this is sufficient to show that the formulae are provably equivalent.

A \emph{tree form} Aleatoric Calculus formula is either atomic, or it has an atomic random variable condition and both its left and right subformulae are in tree form.

\begin{definition} \label{def:tree_form}
    The set of all \emph{tree form} Aleatoric Calculus formulae $T \subset \Phi$ is generated by the following grammar:
    $$ \varphi ::= \top \mid \bot \mid \ite{x}{\varphi}{\varphi} $$
\end{definition}

\begin{lemma} \label{lem:tf}
    For any Aleatoric Calculus formula there exists an equivalent (by $\simeq$) tree form formula.
\end{lemma}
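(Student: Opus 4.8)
The plan is to proceed by structural induction on the Aleatoric Calculus formula, exploiting the fact that \textbf{tree} axiom is precisely the tool for pushing a compound condition down until it becomes atomic, which is exactly what tree form demands. The base cases are immediate: $\top$ and $\bot$ are already tree form, and the atomic variable $x$ is handled by the \textbf{vacuous} axiom, since $x \simeq \ite{x}{\top}{\bot}$ and the latter is in tree form.

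The difficulty concentrates in the inductive case $\ite{\alpha}{\beta}{\gamma}$. By the induction hypothesis there are tree form formulae $\alpha', \beta', \gamma'$ with $\alpha \simeq \alpha'$, $\beta \simeq \beta'$, $\gamma \simeq \gamma'$, and \textbf{Subs} gives $\ite{\alpha}{\beta}{\gamma} \simeq \ite{\alpha'}{\beta'}{\gamma'}$. However, $\alpha'$ is tree form but need not be \emph{atomic}, so $\ite{\alpha'}{\beta'}{\gamma'}$ is in general not yet in tree form: its condition may be a nested if-then-else, which the grammar of Definition~\ref{def:tree_form} forbids.

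To close this gap I would isolate an auxiliary \emph{grafting} lemma: for any three tree form formulae $\tau, \sigma_1, \sigma_2$, the formula $\ite{\tau}{\sigma_1}{\sigma_2}$ is $\simeq$-equivalent to a tree form formula. This I prove by a separate structural induction on the condition $\tau$. When $\tau = \top$ or $\tau = \bot$, the \textbf{always} and \textbf{never} axioms collapse the formula directly to $\sigma_1$ or $\sigma_2$. When $\tau = \ite{x}{\tau_1}{\tau_2}$, the \textbf{tree} axiom rewrites $\ite{\tau}{\sigma_1}{\sigma_2}$ as $\ite{x}{\ite{\tau_1}{\sigma_1}{\sigma_2}}{\ite{\tau_2}{\sigma_1}{\sigma_2}}$; since $\tau_1$ and $\tau_2$ are strict subformulae of $\tau$, the induction hypothesis supplies tree form equivalents of the two branches, and \textbf{Subs} reassembles them under the now-atomic condition $x$.

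The main obstacle is getting the induction in the grafting lemma to terminate, and the naive single induction on the original formula does not obviously succeed: applying \textbf{tree} spawns fresh if-then-else conditions $\tau_1, \tau_2$, so at first glance the complexity does not decrease. The observation that rescues the argument is that these new conditions are \emph{strict subformulae} of $\tau$, so an induction on the structure of the condition alone (with $\sigma_1, \sigma_2$ universally quantified and fixed through the recursion) is well founded. With the grafting lemma established, the inductive case of the main statement reduces to a single application of grafting to $\ite{\alpha'}{\beta'}{\gamma'}$, which completes the proof.
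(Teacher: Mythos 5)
Your proof is correct and follows essentially the same route as the paper's: \textbf{vacuous} for atomic variables, \textbf{tree} to push a compound condition down to an atomic one, \textbf{always} and \textbf{never} to discharge constant conditions, and \textbf{Subs} to assemble tree forms of the subformulae. Your explicit \emph{grafting} lemma, proved by a separate induction on the structure of the condition with the two branches held fixed, is a cleaner and more careful account of the step the paper dispatches with ``it follows by induction,'' and it correctly pins down why that induction is well founded.
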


\begin{proof}
    Boolean atoms ($\top$ and $\bot$) are in tree form by definition.
    By \textbf{vacuous}, any atomic random variable $x$ has an equivalent tree form $\ite{x}{\top}{\bot}$.
    Hence any atomic formula has an equivalent tree form.

    For non-atomic formula, we can show any compound aleatoric calculus formula $\ite{\mu}{\gamma}{\delta}$ 
    is $\simeq$-equivalent to some formula with an atomic condition (i.e. $\ite{x}{\nu_1}{\nu_2}$).
    Consider some compound formula $\ite{\mu}{\gamma}{\delta}$.
    Let us assume there exists some $\ite{x}{\alpha}{\beta} \simeq \mu$ where $x$ is atomic.
    \begin{flalign*}
        && \ite{\mu}{\gamma}{\delta} &\simeq \ite{\ite{x}{\alpha}{\beta}}{\gamma}{\delta} && \textbf{Subs} \\
        && &\simeq \ite{x}{\ite{\alpha}{\gamma}{\delta}}{\ite{\beta}{\gamma}{\delta}} && \textbf{tree}
    \end{flalign*}
    It follows by induction that any compound formulae has an equivalent formula with an atomic condition.
    Furthermore we may assume that $x$ is not $\top$ or $\bot$ since
    \begin{flalign*}
        && \ite{\top}{\alpha}{\beta} &\simeq \alpha && \textbf{always}, \textbf{Sym} \\
        && \ite{\bot}{\beta}{\alpha} &\simeq \alpha && \textbf{never}, \textbf{Sym}
    \end{flalign*}
    Therefore, if the left and right subformulae of a such a compound formula have equivalent tree forms, 
    by \textbf{Subs} we can susbtitute them into the original formula to produce an equivalent tree form.
    It follows by induction that for any Aleatroic Calculus formula there exists an equivalent tree form.
\end{proof}

\begin{figure}
    \centering
    \begin{forest}
        [$x$ [$y$ [$\alpha$] [$\beta$]] [$\gamma$]]
    \end{forest}
    \caption{The graphical representation of the formula $\ite{x}{\ite{y}{\alpha}{\beta}}{\gamma}$.}
    \label{fig:tree_example}
\end{figure}
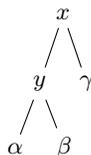

For ease of interpretation, we may represent tree form aleatroic calculus expressions graphically as in Figure~\ref{fig:tree_example}.

\begin{definition}
    A \emph{path} in a tree form aleatoric calculus formula is a sequence of tokens from the set $\set{x, \nol x \mid x \in X}$ 
    corresponding to the outcomes of random trials involved in reaching a terminal node in the tree.
    We define the functions $\Top$ and $\Bot$ to be the set of paths that terminate in a $\top$ or a $\bot$ respectively:
    $$
    \Top(\top) = \Bot(\bot) = \{\seq{}\}, \qquad
    \Top(\bot) = \Bot(\top) = \emptyset,
    $$
    $$
    \Top(\ite{x}{\alpha}{\beta}) = \set{\seq{x} ^\frown a \mid a \in \Top(\alpha)} \cup \set{\seq{\nol x} ^\frown b \mid b \in \Top(\beta)}
    $$
    $$
    \Bot(\ite{x}{\alpha}{\beta}) = \set{\seq{x} ^\frown a \mid a \in \Bot(\alpha)} \cup \set{\seq{\nol x} ^\frown b \mid b \in \Bot(\beta)}
    $$
    where $^\frown$ is the sequence concatenation operator:
    $$
    \seq{a_1,\dots,a_n} ^\frown \seq{b_1,\dots,b_n} = \seq{a_1,\dots,a_n, b_1, \dots, b_n}
    $$
    We say two paths are {\em multi-set equivalent} if the multiset of tokens that appear in each path are equivalent, 
    and define $\Path(\phi) =\Top(\phi)\cup\Bot(\phi)$ to be the set of all paths through a formula.
\end{definition}

\begin{lemma} \label{lem:path_product}
    For any tree form aleatoric calculus formula $\alpha$:
    $$
    P(\alpha) = \sum_{t \in \Top(\alpha)}{\prod_{x \in t}{P(x)}}
    $$
    where $P(\overline{x}) = (1 - P(x))$.
\end{lemma}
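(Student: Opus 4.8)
The plan is to argue by structural induction on the tree form formula $\alpha$, following exactly the grammar of Definition~\ref{def:tree_form}. Since tree form formulae are built only from $\top$, $\bot$, and the if-then-else construct with an atomic condition, the induction has two base cases and a single inductive case, and the defining clauses for $\Top$ mirror this structure precisely, so the argument is essentially a matter of matching the recursion in the semantics against the recursion in the definition of $\Top$.

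For the base cases I would use the semantics directly. When $\alpha = \top$ we have $P(\top) = 1$, and $\Top(\top) = \set{\seq{}}$, so the right-hand side is a sum over the single empty path $\seq{}$; the one point worth stating explicitly is that the empty product is taken to be $1$, so this matches $P(\top) = 1$. Dually, when $\alpha = \bot$ we have $P(\bot) = 0$, and $\Top(\bot) = \emptyset$ gives an empty sum, which is $0$ as required.

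For the inductive step, take $\alpha = \ite{x}{\beta}{\gamma}$ with $\beta,\gamma$ in tree form and $x$ atomic. By the if-then-else clause of Definition~\ref{def:semantics}, $P(\alpha) = P(x)\cdot P(\beta) + P(\nol{x})\cdot P(\gamma)$, where $P(\nol{x}) = 1 - P(x)$. Applying the induction hypothesis to $\beta$ and $\gamma$ and distributing the leading factor into each summand, I would use the identity $P(x)\cdot\prod_{y\in a}P(y) = \prod_{y\in\seq{x}^\frown a}P(y)$ (and similarly for $\nol{x}$) to absorb the leading token into the product over the prefixed path. This rewrites $P(x)\cdot P(\beta)$ as a sum over $\set{\seq{x}^\frown a \mid a\in\Top(\beta)}$ and $P(\nol{x})\cdot P(\gamma)$ as a sum over $\set{\seq{\nol{x}}^\frown b \mid b\in\Top(\gamma)}$.

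The final step is to recombine these two sums into a single sum over $\Top(\alpha)$. The key observation, which I would state explicitly, is that the two families of paths are \emph{disjoint}: every path in the first begins with the token $x$, and every path in the second begins with $\nol{x}$. Consequently the sum over their union equals the sum of the two separate sums, and by the defining clause for $\Top(\ite{x}{\beta}{\gamma})$ that union is exactly $\Top(\alpha)$, yielding the claimed identity. I do not anticipate any genuine obstacle here, as the statement is fundamentally a bookkeeping lemma; the only two points needing care are the empty-product convention in the $\top$ base case and the disjointness that licenses splitting the sum over the union of the two path families.
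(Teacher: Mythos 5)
Your proof is correct and is essentially the paper's own argument spelled out: the paper dismisses this lemma as following immediately from Definition~\ref{def:semantics}, and your structural induction (with the empty-product convention for the $\top$ case and the disjointness of the $x$-prefixed and $\nol{x}$-prefixed path families in the inductive step) is exactly the routine verification being left implicit. No gaps.
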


\begin{proof}
  This follows immediately from the Definition~\ref{def:semantics}.
\end{proof}

\begin{figure}
    \centering
    \begin{forest}
        [$x$
            [$x$
                [$\seq{x, x}$]
                [$y$
                    [$\seq{x, \nol x, y}$]
                    [$\seq{x, \nol x, \nol y}$]
                ]
            ]
            [$x$
                [$y$
                    [$\seq{\nol x, x, y}$]
                    [$\seq{\nol x, x, \nol y}$]
                ]
                [$\seq{\nol x, \nol x}$]
            ]
        ]
    \end{forest}
    \caption{An example of paths for a tree form, corresponding to 
      $\phi = \ite{x}{\ite{x}{\bot}{y}}{\ite{x}{y}{\bot}}$. 
      The set $\Top(\phi)$ is $\{\seq{x,\nol x, y},\seq{\nol x,x,y}\}$.}
    \label{fig:path_example}
\end{figure}

\begin{lemma} \label{lem:path_perm}
  Suppose that $\phi$ is a formula in tree form such that $a = (a_0,\hdots,a_n)\in\Top(\phi)$ (resp. $\Bot(\phi)$).
  Then, for any $i<n$ there is some formula $\phi_a^i$ such that:
  \begin{enumerate}
    \item $\phi\simeq\phi_a^i$
    \item $(a_0,\hdots,a_{i-1},a_{i+1},a_i,a_{i+2},\hdots,a_n)\in \Top(\phi_a^i)$ (resp. $\Bot(\phi)$)
    \item $\phi$ and $\phi_a^i$ agree on all paths that do not have the prefix $(a_0,\hdots, a_{i-1}$. 
      That is, for all $b\in\Path(\phi)\cup\Path(\phi_a^i)$, where for some $j<i$, $b_j \neq a_j$, 
      we have $b\in\Top(\phi)$ if and only if $b\in\Top(\phi_a^i)$ and $b\in\Bot(\phi)$ if and only if $b\in\Bot(\phi_a^i)$. 
  \end{enumerate}
\end{lemma}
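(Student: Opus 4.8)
The plan is to prove this directly, by a local surgery on the single subformula that the prefix $(a_0,\dots,a_{i-1})$ reaches, using only the axioms \textbf{ignore} and \textbf{swap} together with \textbf{Subs}. First I would observe that following the prefix $(a_0,\dots,a_{i-1})$ from the root of $\phi$ arrives at a uniquely determined occurrence of a subformula $\psi$. Since the hypothesis $i<n$ guarantees that the path $a$ continues with both tokens $a_i$ and $a_{i+1}$, this $\psi$ must have the form $\ite{x}{L}{R}$, where $x$ is the variable tested by $a_i$; moreover the child selected by $a_i$ is itself an \textbf{ite} whose condition is the variable $y$ tested by $a_{i+1}$. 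So the child on the path already tests $y$ at its root, while its sibling in general does not.

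The core step is to bring $\psi$ into the shape required by \textbf{swap}, namely $\ite{x}{\ite{y}{\cdot}{\cdot}}{\ite{y}{\cdot}{\cdot}}$, and then exchange the two levels. For the sibling subformula $\sigma$ I would apply \textbf{ignore} read right-to-left, replacing $\sigma$ by $\ite{y}{\sigma}{\sigma}$; this is an instance of \textbf{ignore} with condition $y$, applied via \textbf{Subs}, so it preserves provable equivalence. Now both children of $\psi$ are conditioned on $y$, and \textbf{swap} gives $\psi\simeq\psi'$ with $\psi' = \ite{y}{\ite{x}{\cdots}{\cdots}}{\ite{x}{\cdots}{\cdots}}$. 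Finally I would lift this local equivalence to $\phi\simeq\phi_a^i$ by \textbf{Subs}, replacing the single occurrence of $\psi$ reached by the prefix with $\psi'$ (exactly as subformulae are replaced in the proof of Lemma~\ref{lem:tf}). This establishes condition (1).

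For condition (2) I would trace the destination of $a$. In $\psi$ the tokens $a_i,a_{i+1}$ select first the $x$-branch and then the $y$-branch, reaching some continuation subtree $\kappa$; in $\psi'$ that same $\kappa$ now sits under ``first the $y$-branch matching $a_{i+1}$, then the $x$-branch matching $a_i$''. Hence the reordered path $(a_0,\dots,a_{i-1},a_{i+1},a_i,a_{i+2},\dots,a_n)$ reaches the identical terminal of $\kappa$, so it lies in $\Top(\phi_a^i)$ precisely when $a\in\Top(\phi)$, and symmetrically for $\Bot$. A short split over the four sign combinations of $a_i$ and $a_{i+1}$ confirms that the \textbf{swap} instance routes $\kappa$ to exactly the transposed branch in each case. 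Condition (3) is then immediate: every modification was confined to the single occurrence $\psi$ reached by the prefix, so any path $b$ that already disagrees with $a$ at some position $j<i$ traverses a region of the tree untouched by the surgery, and its membership in $\Top$/$\Bot$ is unchanged.

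The main obstacle to handle carefully is precisely that the sibling of the on-path child need not test $y$ — it may test another variable or be a $\top/\bot$ leaf — so \textbf{swap} is not directly applicable; the \textbf{ignore}-in-reverse padding is what repairs this. Here I would check that the padding only ever introduces paths that diverge from $a$ at position $i$ itself (by taking the sibling branch), never strictly before it, so condition (3) is never threatened. The rest is bookkeeping: fixing the orientation of the \textbf{swap} instance and the branch selections so that it is the transposed path, rather than some other reordering of $\kappa$'s tokens, that is certified to reach the same terminal.
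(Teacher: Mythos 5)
Your proposal is correct and uses essentially the same argument as the paper: locate the subformula reached by the prefix, pad its off-path sibling via \textbf{ignore} read right-to-left so that both children test the same variable, apply \textbf{swap}, and lift the local equivalence to $\phi$ by \textbf{Subs}, with the four sign combinations of $a_i,a_{i+1}$ handled case by case. The paper dresses this up as an induction on path length, but its core step is exactly your \textbf{ignore}-padding followed by \textbf{swap}, so the two proofs coincide in substance.
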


\begin{proof}
    This is trivially true for an empty path or a path of unit length, as there is no way to reorder either such path.
    Let us now assume this statement holds for a path of length $n$.
    Consider a path $a$ of length $n+1$, $n > 0$, and suppose $i\leq n$ is given.
    Suppose $a_i\in\{x,\nol x\}$, and $a_{i+1}\in \{x,\nol y\}$. 
    If $a_i = x$ and $a_{i+1} = y$ then $\phi$ has a subformula $\ite{x}{\ite{y}{\alpha}{\beta}}{\gamma}$
    with the path leading to the formula matching $(a_0,\hdots,i)$. 
    Similar subformulae can be found for the other possibilities of $a_i$ and $a_{i+1}$.
    We can consider this path as the concatenation of a unit path (the head) and a path of length $n$ (the tail):
    $$
    a = \seq{a_0} ^\frown \seq{a_1, \dots, a_n}
    $$
    Let $a^\prime = \seq{a_0^\prime, a_1^\prime, \dots, a_n^\prime}$ be some arbitrary target permutation of $a$.
    If $a_0^\prime = a_0$, we can simply reorder $\seq{a_1, \dots, a_n}$ as desired.
    Otherwise, it follows that $a_0^\prime$ must appear somewhere in the tail of $a$.
    We may therefore permute the tail of $a$ such that $a_0^\prime$ is at its head:
    $$
    \seq{a_0} ^\frown \seq{a_0^\prime, \dots}
    $$
    Let us assume that $a_0 \in \set{x, \nol x}$, $a_0^\prime \in \set{y, \nol y}$, and that the tail of $\seq{a_0^\prime, \dots}$ is a path through a subtree $\alpha$.
    This gives four possible tree structures, all of which may be rearranged using \textbf{ignore} and \textbf{swap}:
    $$
    \ite{x}{\ite{y}{\alpha}{\beta}}{\gamma} \simeq \ite{x}{\ite{y}{\alpha}{\beta}}{\ite{y}{\gamma}{\gamma}} \simeq \ite{y}{\ite{x}{\alpha}{\gamma}}{\ite{x}{\beta}{\gamma}}
    $$
    $$
    \ite{x}{\ite{y}{\beta}{\alpha}}{\gamma} \simeq \ite{x}{\ite{y}{\beta}{\alpha}}{\ite{y}{\gamma}{\gamma}} \simeq \ite{y}{\ite{x}{\beta}{\gamma}}{\ite{x}{\alpha}{\gamma}}
    $$
    $$
    \ite{x}{\gamma}{\ite{y}{\alpha}{\beta}} \simeq \ite{x}{\ite{y}{\gamma}{\gamma}}{\ite{y}{\alpha}{\beta}} \simeq \ite{y}{\ite{x}{\gamma}{\alpha}}{\ite{x}{\gamma}{\beta}}
    $$
    $$
    \ite{x}{\gamma}{\ite{y}{\beta}{\alpha}} \simeq \ite{x}{\ite{y}{\gamma}{\gamma}}{\ite{y}{\beta}{\alpha}} \simeq \ite{y}{\ite{x}{\gamma}{\beta}}{\ite{x}{\gamma}{\alpha}}
    $$
    Combined with the substitution rule we define $\phi_a^i$ as the reulst of performing these transformations with in $\phi$.
    Then $\phi_a^i\simeq\phi$, and as $\alpha$ remains the subformula after $a_{i+1}$ and $a_i$, the suffix of the path remains the same.
    Finally, as the substitution will only affect the subtree where the substitution occurs, the paths without the prefix $(a_0,\hdots,a_i)$ remain unchanged.
\end{proof}
Therefore, we are able repeatdly apply this construcion to permute the elements of paths within a formula.

\begin{lemma} \label{lem:path_swap}
  Given a pair of multi-set equivalent paths $a$ and $b$ in a tree form aleatoric calculus formula, $\phi$, such that $a\in \Top(\phi)$ and $b\in\Bot(\phi)$, 
  we can find a formula $\phi_a^b\simeq\phi$ where 
  \begin{enumerate}
    \item $a\in \Bot(\phi_a^b)$ and $b\in \Top(\phi_a^b)$
    \item $\Top(\phi_a^b)-\{b\} = \Top(\phi)-\{a\}$, 
    \item $\Bot(\phi_a^b)-\{a\} = \Bot(\phi)-\{b\}$.
  \end{enumerate}
\end{lemma}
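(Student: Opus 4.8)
The plan is to exploit the multi-set equivalence to bring $a$ and $b$ into a canonical ``transposed pair'' configuration by repeatedly applying Lemma~\ref{lem:path_perm}, to perform a single application of the \textbf{swap} axiom at the point where the two paths diverge, and then to undo the reordering. First I would analyse the divergence. Since $a$ and $b$ are multi-set equivalent they have equal length and the same tokens; let $k$ be the first position at which they differ. Both paths reach the same node after their common prefix, so that node tests a single variable $x$, and without loss of generality $a_k = x$ while $b_k = \nol x$. Comparing the multisets of the two suffixes then forces $a$ to contain a later occurrence of $\nol x$ and $b$ a later occurrence of $x$. Writing $M$ for the common multiset, the target aligned form is
\[
 a^\ast = c ^\frown \seq{x,\nol x}, \qquad b^\ast = c ^\frown \seq{\nol x, x},
\]
where $c$ is a fixed canonical ordering of $M \setminus \mset{x,\nol x}$ (available for both paths, since removing one $x$ and one $\nol x$ from each leaves the same multiset). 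Placing the transposed pair at the very end guarantees that the two subtrees it selects are single leaves.

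Next I would realise this form. By iterating Lemma~\ref{lem:path_perm} there is a formula $\psi \simeq \phi$ obtained from $\phi$ by a sequence $R$ of path permutations under which $a$ becomes $a^\ast \in \Top(\psi)$ and $b$ becomes $b^\ast \in \Bot(\psi)$. In $\psi$ both children of the node reached by $c$ test $x$ (the $x$-child because $a^\ast$ continues with $\nol x$, the $\nol x$-child because $b^\ast$ continues with $x$), so the subtree at $c$ has the shape $\ite{x}{\ite{x}{\mu}{\top}}{\ite{x}{\bot}{\nu}}$, the $\top$ being the end of $a^\ast$ and the $\bot$ the end of $b^\ast$. A single application of \textbf{swap} gives
\[
 \ite{x}{\ite{x}{\mu}{\top}}{\ite{x}{\bot}{\nu}} \simeq \ite{x}{\ite{x}{\mu}{\bot}}{\ite{x}{\top}{\nu}},
\]
so in the resulting $\psi'$ exactly the two leaves at the ends of $a^\ast$ and $b^\ast$ are exchanged: $a^\ast \in \Bot(\psi')$, $b^\ast \in \Top(\psi')$, and every other path keeps its terminal.

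Finally I would apply the inverse sequence $R^{-1}$ to $\psi'$, which is legitimate because every step of Lemma~\ref{lem:path_perm} is reversible (its \textbf{ignore} and \textbf{swap} rewrites are equivalences). This produces $\phi_a^b \simeq \phi$. By tracking, through $R^{-1}$, the leaf sitting at the end of each path, the only terminals that can differ from those of $\phi$ are the ones at the ends of $a$ and $b$; hence $a \in \Bot(\phi_a^b)$, $b \in \Top(\phi_a^b)$, and all remaining paths agree with $\phi$, which is conditions (1)--(3).

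The main obstacle is controlling the subtree duplication that Lemma~\ref{lem:path_perm} introduces (its \textbf{ignore} step replaces $\gamma$ by $\ite{y}{\gamma}{\gamma}$). For $R^{-1}$ to collapse these duplicates back correctly, the two leaves modified by the \textbf{swap} must not lie inside any subtree that $R$ duplicated; this holds because permuting a path only ever duplicates the sibling subtrees hanging off that path, never the terminal leaf lying on the path itself. One must additionally check that the permutation of $a$ and the permutation of $b$ do not interfere at their shared divergence node, which is precisely what dictates the order in which the transpositions making up $R$ are carried out. Verifying this non-interference, so that the net effect of $R$, the \textbf{swap}, and $R^{-1}$ is exactly the exchange of the $a$-leaf and the $b$-leaf and nothing else, is the delicate part of the argument.
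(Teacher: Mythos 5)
Your overall architecture --- use Lemma~\ref{lem:path_perm} to normalise both paths, apply \textbf{swap} once, then reverse --- is in the right spirit, but the normalisation step as you describe it does not go through, and the point of failure is exactly the point you defer as ``delicate''. The paths $a$ and $b$ first diverge at some position $k$, with $a_k = x$ and $b_k = \nol x$, so the leaf of $b$ sits inside the sibling subtree $\gamma$ hanging off path $a$ at that divergence node. To reach your canonical forms $a^\ast = c ^\frown \seq{x,\nol x}$ and $b^\ast = c ^\frown \seq{\nol x, x}$ the divergence must be pushed down to depth $n-1$, which forces a transposition of $a$ at position $k$ itself. The rewrite underlying Lemma~\ref{lem:path_perm} at that node is $\ite{x}{\ite{y}{\alpha}{\beta}}{\gamma} \simeq \ite{y}{\ite{x}{\alpha}{\gamma}}{\ite{x}{\beta}{\gamma}}$, which duplicates precisely this $\gamma$: afterwards $b$ is no longer a single path (it becomes two paths, each one token longer), your single terminal \textbf{swap} can repair at most one copy, and once the two copies of $\gamma$ differ the \textbf{ignore} step cannot be undone, so the inverse sequence $R^{-1}$ is not available. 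Your justification that ``permuting a path only ever duplicates the sibling subtrees hanging off that path, never the terminal leaf lying on the path itself'' protects the $a$-leaf but not the $b$-leaf, which is exactly a leaf of such a sibling subtree.

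The paper's proof avoids this by inducting on path length rather than normalising globally. It descends through the common prefix (when both paths begin with the same token it recurses into the shared subtree), and at the divergence node $\ite{x}{\mu}{\nu}$ it applies Lemma~\ref{lem:path_perm} only \emph{inside} $\nu$ --- which contains $b$'s suffix but not $a$ --- to bring a token $t$ matching $a$'s second token to the head of $b$'s suffix; then one application of \textbf{swap} at the divergence node (which duplicates nothing) puts both paths into the same subtree, and the induction hypothesis handles the two shorter suffixes before the rearrangement of $\nu$ is reversed. To rescue your plan you would have to interleave such ``align inside one subtree, then swap at the divergence node'' steps at every level, at which point you have reproduced the paper's induction.
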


\begin{proof}
    Suppose $a$ and $b$ have length $2$.
    There are exactly two permutations for a path of length two, the original, and its reverse.
    If the paths are of the form $\seq{x, \nol x}$ and $\seq{\nol x, x}$ respectively, then it follows that the tree must be of the form $\ite{x}{\ite{x}{\alpha}{\beta}}{\ite{x}{\gamma}{\delta}}$, and so the paths can be swapped using \textbf{swap}.
    Let us now assume that this statement holds for all paths of length at most $n > 1$.
    Consider a pair of multiset-equivalent paths of length $n+1$ in $\ite{x}{\mu}{\nu}$.
    If both paths start with $x$, then we can remove this $x$ from both and instead have two multiset-equivalent paths of length $n$ in $\mu$, which by our earlier assumption we are able to swap.
    Similarly, if both paths start with $\nol x$ we can remove it from both and recurse into $\nu$.
    Otherwise, one path must start with $x$, and pass through $\mu$, while the other starts with $\nol x$, and passes through $\nu$.
    Let us refer to these paths as $l = \seq{x, t, \dots}$, $t \in \set{y, \nol y}$ and $r = \seq{\nol x, r_1, \dots}$, respectively.
    Since $l$ and $r$ are multiset-equivalent, it follows that $r$ must contain at least one $t$ term.
    By \cref{lem:path_perm}, we are able to rearrange $\nu$ to permute the subpath of $r$ it contains in order to bring a $t$ term to the head of the subpath.
    Note that this rearrangement is reversible.
    This gives a permutation $\seq(\nol x, t, \dots)$ of $r$.
    Since both paths now have the same second element, $t \in \set{y, \nol y}$, it follows that both subtrees must have $y$ at their root.
    We are therefore able to apply \textbf{swap}:
    $$
    \ite{x}{\ite{y}{\alpha}{\beta}}{\ite{y}{\gamma}{\delta}} \simeq \ite{y}{\ite{x}{\alpha}{\gamma}}{\ite{x}{\beta}{\delta}} 
    $$
    This in turn swaps the first two elements of $l$ and $r$ to give $\seq{t, x, \dots}$ and $\seq{t, \nol x, \dots}$.
    Since both paths now start with the same element, $t$, and are therefore in the same subtree, we can remove this $t$ term and instead have two multiset-equivalent paths of length $n$ in either the left or right subtree, which we are able to swap as per our earlier assumption.
    We are then able to replace $t$ in both paths and reverse our previous rearrangements.
    Since only the leaves were swapped, reversing the process gives us the original tree except that the two leaves corresponding to the given paths have been swapped.
    We have therefore shown that if this statement holds true for a pair of paths of length $n > 1$, it must also hold for a pair of paths of length $n+1$.
\end{proof}

\begin{lemma} \label{lem:tf_same_structure}
    For any pair of tree form aleatoric calculus formulae, $\phi$ and $\psi$, there exists a pair of tree forms $\phi'\simeq\phi$ and $\psi'\simeq\psi$,
    such that $\Path(\phi') = \Path(\psi')$.
\end{lemma}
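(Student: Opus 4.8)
The plan is to prove a slightly stronger, quantified statement by induction on the size of $\phi$ \emph{alone}, namely: \emph{for every tree form $\psi$, there exist $\phi'\simeq\phi$ and $\psi'\simeq\psi$ with $\Path(\phi')=\Path(\psi')$}. Here size means the number of internal (conditional) nodes. Quantifying over all $\psi$ inside the induction hypothesis is the crucial move: the reordering steps below can \emph{enlarge} $\psi$, so an induction on the combined size of $\phi$ and $\psi$ would not obviously be well-founded, whereas an induction on $|\phi|$ with $\psi$ treated universally is.

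First I would establish an auxiliary \emph{pull-to-top} claim: for any tree form $\chi$ and any variable $x$, there are tree forms $\chi_T,\chi_F$ with $\chi\simeq\ite{x}{\chi_T}{\chi_F}$. This is proved by structural induction on $\chi$. If $\chi$ is a leaf, then $\chi\simeq\ite{x}{\chi}{\chi}$ by \textbf{ignore}. If $\chi$ already has root variable $x$ there is nothing to do. Otherwise $\chi=\ite{z}{\chi_1}{\chi_2}$ with $z\neq x$; applying the induction hypothesis to $\chi_1$ and $\chi_2$ and then the \textbf{swap} axiom (together with \textbf{Subs}) brings $x$ to the root, exactly the reordering mechanism underlying Lemma~\ref{lem:path_perm}. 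Each recursive call is on a strict subformula, so this terminates.

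With pull-to-top in hand I would run the main induction on $|\phi|$. In the base case $\phi$ is a leaf $c\in\set{\top,\bot}$: for an arbitrary $\psi$ take $\psi'=\psi$, and let $\phi'$ be obtained by grafting a copy of $c$ onto every leaf of $\psi$, so that $\phi'$ has exactly the branching shape of $\psi$. Repeated use of \textbf{ignore} with \textbf{Subs} collapses $\phi'$ back to $c$, giving $\phi'\simeq\phi$, while $\Path(\phi')=\Path(\psi)$ because $\Path$ depends only on the branching shape, not the leaf labels; the case where $\psi$ is a leaf is symmetric. In the inductive step both $\phi=\ite{x}{\phi_1}{\phi_2}$ and $\psi$ are non-leaves: I pull $x$ to the top of $\psi$ to get $\psi\simeq\ite{x}{\psi_T}{\psi_F}$, then apply the induction hypothesis (valid since $|\phi_1|,|\phi_2|<|\phi|$ and the hypothesis ranges over all second arguments) to the pairs $(\phi_1,\psi_T)$ and $(\phi_2,\psi_F)$. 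Recombining under a shared root test on $x$ as $\phi'=\ite{x}{\phi_1'}{\phi_2'}$ and $\psi'=\ite{x}{\psi_T'}{\psi_F'}$, the recursive decomposition of $\Top$ and $\Bot$ (hence of $\Path$) across the $x$-branch yields $\Path(\phi')=\Path(\psi')$, and \textbf{Subs} together with pull-to-top gives $\phi'\simeq\phi$ and $\psi'\simeq\psi$.

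The main obstacle is precisely this well-foundedness issue: aligning the root variables can increase node counts, so the induction must be arranged so that the strictly decreasing measure is $|\phi|$ only. I expect most of the care to go into verifying that this measure genuinely decreases in every recursive call and into checking that the path bookkeeping, via the clean splitting of $\Path=\Top\cup\Bot$ across the shared root test, goes through uniformly in all four branch configurations handled by \textbf{swap}.
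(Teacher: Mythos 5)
Your proof is correct, but it takes a genuinely different route from the paper's. The paper never aligns root variables and never invokes \textbf{swap} in this lemma: it only uses \textbf{ignore} to duplicate whole subtrees, writing $R\simeq\ite{x}{R}{R}$ when $L=\ite{x}{\alpha}{\beta}$ and $R$ is a leaf, and $L\simeq\ite{y}{L}{L}$ when $R=\ite{y}{\gamma}{\delta}$, then recursing; the common shape it produces is essentially $\psi$'s test tree with a copy of $\phi$'s test tree grafted at every leaf, and termination comes from a staged induction on the two heights (the second argument shrinks while the first is held fixed, until the second is a leaf, after which the first shrinks). You instead align the root variable of $\phi$ with $\psi$ via a pull-to-top lemma --- which is where \textbf{swap} enters, and which essentially repackages the mechanism of Lemma~\ref{lem:path_perm} --- and recurse into matching branches. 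Your insistence that the induction be on $|\phi|$ alone with $\psi$ universally quantified is exactly the right fix for the fact that pull-to-top can enlarge $\psi$, and your base case (replacing every leaf of $\psi$ by the constant and collapsing with \textbf{ignore}) coincides with the paper's height-zero cases; the sub-case where $\psi$ is a leaf but $\phi$ is not properly belongs inside the inductive step rather than the base case, but the symmetric construction you indicate does handle it. The trade-off: the paper's argument is more elementary and cleanly separates ``same shape'' (this lemma) from ``rearranging tests'' (Lemmas~\ref{lem:path_perm} and~\ref{lem:path_swap}), whereas yours front-loads the variable-alignment machinery here and in exchange yields a common shape that preserves shared root tests.
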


\begin{proof}
    Let us consider a pair of tree forms $L$ and $R$ of heights $h_L$ and $h_R$, respectively.
    In the case that $m = n = 0$, both trees are already atomic leaves, and so the statement holds.
    Next consider the case that $L = \ite{x}{\alpha}{\beta}$ and $h_R = 0$.
    We can construct $\ite{x}{R}{R} \simeq R$, and therefore require the statement holds for the pairs $\alpha$ and $R$ and $\beta$ and $R$.
    Since $\alpha$ and $\beta$ must be shorter than $L$, it follows by induction that the statement holds for all $h_L >= 0, h_R = 0$.
    A similar process works to show the statement holds for all $h_L = 0, h_R >= 0$.
    Next consider the case that $L = \ite{x}{\alpha}{\beta}$ and $R = \ite{y}{\gamma}{\delta}$.
    We may construct $L \simeq \ite{y}{L}{L}$ and require that the statement holds for the pairs $L$ and $\gamma$ and $L$ and $\delta$.
    Since $\gamma$ and $\delta$ must be shorter than $R$, and we have shown that the statement holds for all $h_L >= 0, h_R = 0$, 
    it follows by induction that the statement holds for all $h_L >= 0, h_R >= 0$, and hence for all pairs of tree form aleatoric calculus formulae.
\end{proof}

\begin{theorem} \label{thm:completeness}
    For any pair of semantically equivalent aleatoric calculus formulae $\phi$ and $\psi$, we can show $\phi\simeq\psi$.
\end{theorem}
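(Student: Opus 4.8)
The plan is to reduce both formulae to tree form and then reconcile their leaf labellings by repeated path swaps, so that everything rests on the lemmas already established. First I would apply Lemma~\ref{lem:tf} to replace $\phi$ and $\psi$ by $\simeq$-equivalent tree forms; since $\acAx$ is sound (Lemma~\ref{bacAx-sound}), these tree forms remain semantically equivalent to one another. Then I would invoke Lemma~\ref{lem:tf_same_structure} to pass to further $\simeq$-equivalent tree forms $\phi'$ and $\psi'$ with $\Path(\phi') = \Path(\psi')$. By transitivity and symmetry of $\simeq$ it now suffices to prove $\phi'\simeq\psi'$, and because the reductions are sound we still have $P(\phi') = P(\psi')$ for every assignment.

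The heart of the argument is to show that this semantic equality forces $\phi'$ and $\psi'$ to carry matching $\Top/\Bot$ labellings, counted per multiset class of paths. By Lemma~\ref{lem:path_product}, $P(\phi') = \sum_{t\in\Top(\phi')}\prod_{x\in t}P(x)$, a polynomial in the values $P(x)$ with $P(\nol x) = 1-P(x)$; all paths lying in a single multiset class contribute the identical monomial $\prod_{x}P(x)^{a_x}(1-P(x))^{b_x}$. The key fact I need is that, over distinct multiset classes, these monomials are linearly independent as functions on $[0,1]^X$, which reduces one variable at a time to the linear independence of the univariate family $\{t^a(1-t)^b\}$ (a Bernstein-type basis). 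Since $\Path(\phi')=\Path(\psi')$, the difference $P(\phi')-P(\psi')$ is a signed sum of such monomials over the paths on which the two labellings disagree; linear independence then forces, in each class $M$, the count of paths that are $\Top$ in $\phi'$ but $\Bot$ in $\psi'$ to equal the count that are $\Bot$ in $\phi'$ but $\Top$ in $\psi'$.

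With these matched counts in hand I would reconcile the labellings by induction on the number of disagreements. If $\phi'$ and $\psi'$ differ somewhere, pick a path $a\in\Top(\phi')$ with $a\in\Bot(\psi')$; the count equality within $a$'s multiset class supplies a compensating, multiset-equivalent path $b\in\Bot(\phi')$ with $b\in\Top(\psi')$. Lemma~\ref{lem:path_swap} then produces $\phi''\simeq\phi'$ whose labelling agrees with $\psi'$ on both $a$ and $b$ and is unchanged on every other path, strictly decreasing the number of disagreements while preserving the path set. Iterating yields a tree form with $\Top$ equal to $\Top(\psi')$ and $\Bot$ equal to $\Bot(\psi')$; since a tree form is determined uniquely by its path set together with its $\Top/\Bot$ partition, this formula is syntactically $\psi'$. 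Hence $\phi'\simeq\psi'$, and therefore $\phi\simeq\psi$.

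I expect the linear-independence step to be the main obstacle: the swapping and reduction are an orchestration of results already proved, but it is precisely the independence of the multiset-class monomials that converts the analytic hypothesis $\phi\cong\psi$ into the combinatorial count-matching that the swap construction of Lemma~\ref{lem:path_swap} consumes. Care is also needed to confirm that distinct paths within one tree can share a monomial only when they are multiset equivalent, so that grouping by class captures exactly the coincidences among the $\prod_{x\in t}P(x)$.
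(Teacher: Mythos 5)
Your proof follows the paper's own argument step for step: reduce both formulae to tree form (Lemma~\ref{lem:tf}), equalise the path structure (Lemma~\ref{lem:tf_same_structure}), and then use Lemma~\ref{lem:path_swap} to permute the $\Top$/$\Bot$ labelling of one tree into that of the other, with Lemma~\ref{lem:path_product} mediating between semantics and paths. You are in fact more explicit than the paper at the crucial point: Lemma~\ref{lem:path_product} only yields the easy direction (a multiset-preserving bijection between $\Top$ sets implies semantic equivalence), and the paper asserts the converse with no argument, whereas you correctly isolate it as the step that converts $\phi\cong\psi$ into the per-class count matching that the swap construction consumes.

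However, the justification you offer for that step fails as stated. The monomials $\prod_x P(x)^{a_x}(1-P(x))^{b_x}$ attached to \emph{arbitrary} distinct multiset classes are not linearly independent on $[0,1]^X$: already $1 = t + (1-t)$ relates the classes $\emptyset$, $\set{x}$ and $\set{\nol x}$, and $t = t^2 + t(1-t)$ relates $\set{x}$, $\set{x,x}$ and $\set{x,\nol x}$. The univariate family $\set{t^a(1-t)^b}$ is a Bernstein basis only when restricted to a fixed total degree $a+b=n$; across degrees it is heavily dependent, and the paths of a common tree structure do in general have different lengths and different numbers of occurrences of a given variable, so the variable-by-variable reduction you propose does not go through. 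What must rescue the argument is that the multiset classes in play are not arbitrary: they are exactly those realised by the leaves of one tree, whose paths form a complete prefix-free family, and this structural constraint rules out configurations such as $\set{x}$, $\set{x,x}$ and $\set{x,\nol x}$ co-occurring. You would need to state and prove linear independence for precisely this restricted family (say, by induction on the tree, conditioning on the root variable), and that is the genuine mathematical content of the theorem. To be fair, this is exactly the gap the published proof also leaves open --- it hides the claim behind an ``if and only if'' attributed to Lemma~\ref{lem:path_product} --- so your proposal is no less complete than the paper's argument; it is simply more candid about where the difficulty sits, while offering a repair that is not yet correct.
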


\begin{proof}
    By Lemma~\ref{lem:tf} it is possible to convert both formulae in question, $\phi$ and $\psi$, 
    into tree form, respectively $\phi^\tau$ and $\psi^\tau$.
    By Lemma~\ref{lem:tf_same_structure} it is then possible to convert $\phi^\tau$ and $\psi^\tau$ to a pair of equivalent formulae, 
        respectively $\Phi$ and $\Psi$, with the same structure (so $\Top(\Phi)\cup\Bot(\Phi) = \Top(\Psi)\cup\Bot(\Psi)$), 
        but possibly different leaves (so $\Top(\Phi)$ is possibly not the same as $\Top(\Psi))$.
    By Lemma~\ref{lem:path_swap} it is possible to swap any multiset-equivalent paths between $\Top(\Phi)$ and $\Bot(\Phi)$.
    By Lemma~\ref{lem:path_product} two formula, $\Phi$ and $\Psi$, with the same structure are semantically equivalent if and only if 
    there is a one-to-one correspondence between paths of $\Phi$ and $\Psi$ such that 
    corresponding paths $a$ and $b$ are multi-set equivalent, 
    and $a\in\Top(\Phi)$ if and only if $b\in\Top(\Psi)$. 
    Therefore, if and only if the two formulae are equivalent we are able to define $\Phi'$ by 
    swapping paths between $\Top(\Phi)$ and $\Bot(\Phi)$ such that $\Phi' = \Psi$.
    As all steps are performed using the axioms and are reversible, this is sufficient to show $\phi\simeq\psi$.
\end{proof}

\subsection{The Modal Aleatoric Calculus}
The modal aleatoric calculus includes the propositional fragment, as well as the conditional expectation operator $\cond{\alpha}{\beta}_i$ 
that depends on the modality $i$'s probability distribution over the set of worlds. 

The axioms we have for the conditional expectation operator are as follows:
$$\begin{array}{rrcl}
  {\bf A0:}&\qquad \cond{\ite{x}{y}{z}}{c}_i &\simeq& \ite{\cond{x}{c}_i}{\cond{y}{\ite{x}{c}{\bot}}_i}{\cond{z}{\ite{x}{\bot}{c}}_i}.\\
  {\bf A1:}&\qquad \cond{\bot}{x}_i\land\cond{x}{y}_i &\simeq& \cond{\bot}{x\lor y}_i\\
  {\bf A2:}&\qquad \cond{\bot}{x}_i &\simeq& \ite{\cond{\bot}{x}_i}{\cond{\bot}{x}_i}{\lnot\cond{\bot}{x}_i}\\
  {\bf A3:}&\cond{\top}{x}_i &\simeq& \top\\
  {\bf A4:}&\cond{x}{\bot}_i &\simeq& \top
  %{\bf A3:}&\cond{\bot}{\top}_i &\simeq&\bot\\ Leave out to make as general as possible, by allowing empty prob distributions
\end{array}$$
We let the axiom system \macAx\ be the axiom system \acAx\ along with the axioms {\bf A0}-{\bf A5}.

We note that the conditional expectation operator $\cond{x}{y}_i$ is its own dual, 
but only in the case that agent $i$ does not consider $x$ and $y$ to be mutually exclusive: 
$\cond{\lnot x}{y}_i\simeq \cond{x}{y}_i\rightarrow \believes_i(\lnot(x\land y)).$
We can see this in the following derivation:
$$\begin{array}{rclr}
  \cond{\lnot x}{y}_i&\simeq&{\cond{\ite{x}{\bot}{\top}}{y}_i} &\textrm{abb.}\\
  &\simeq& \ite{\cond{x}{y}_i}{\cond{\bot}{x\land y}_i}{\cond{\top}{\ite{x}{\bot}{\bot}}_i}&\textrm{A0}\\
  &\simeq& \cond{x}{y}_i\rightarrow\believes_i\lnot(x\land y)&\textrm{abb.}
\end{array}$$

The main axiom in $\macAx$ is the axiom {\bf A0} which is a rough analogue of the {\bf K} axiom in modal logic. 
We note that in this axiom: 
$$\cond{\ite{x}{y}{z}}{c}_i\simeq\ite{\cond{x}{c}_i}{\cond{y}{\ite{x}{c}{\bot}}_i}{\cond{z}{\ite{x}{\bot}{c}}_i}$$
if we substitute $\top$ for $y$ and $\bot$ for $w$, we have:
$$\expect_i x\land\cond{y}{x}_i\simeq\expect_i(x\land y)$$
whenever agent $i$ considers $x$ possible (so that $\cond{\bot}{\lnot x}_i\simeq\bot$). 
In that case we can ``divide'' both sides of the semantic equality by $P_w(\expect_i x)$ 
which gives the Kolmogorov definition of conditional probability: 
$$P_w(\cond{y}{x}_i) = \frac{P_w(\expect_i(x\land y))}{P_w(\expect_i x)}.$$

Axioms {\bf A1} and {\bf A2} deal with formulas of the type $\cond{\bot}{\alpha}_i$. 
The probability associated with such formulas is non-zero if and only if $\alpha$ is impossible in all the successor states,
so in these states, we are able to substitute $\alpha$ with $\bot$.

Finally axioms {\bf A3} and {\bf A4} allow us to eliminate conditional expectation operators.

As with the aleatoric calculus, soundness can be shown by deriving equivalence of the semantic evaluation equations, 
although the proofs are more complex.

We show the soundness of {\bf A0} and {\bf A1}. The soundness of {\bf A2-4} follows immediately from the semantic definition.

\begin{lemma}\label{lem:A0Sound}
  The Axiom {\bf A0} is sound.
\end{lemma}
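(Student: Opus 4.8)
The plan is to show that both sides of the {\bf A0} equation receive the same semantic value $P_w(\cdot)$ at every pointed probability model $P_w$, which by Definition~\ref{def:semantics} suffices for $\simeq$. The essential tool is the semantic clause for the conditional expectation operator, $P_w(\cond{\alpha}{\beta}_i) = E^i_w(\alpha\land\beta)/E^i_w(\beta)$ (with the value $1$ when $E^i_w(\beta)=0$), together with the fact that the expectation operator $E^i_w$ is linear in its argument when that argument is decomposed along the if-then-else structure. Since the variable $x$ appearing in the condition $\ite{x}{y}{z}$ is sampled at each successor world $u$, I would first unfold the numerator $E^i_w(\ite{x}{y}{z}\land c)$ using the semantics of $\land$ (as $\ite{\cdot}{\cdot}{\bot}$) and of $\ite{x}{y}{z}$, exactly as was done for the {\bf tree} axiom in Lemma~\ref{lem:treeSound}, but now inside the sum $\sum_{u\in W}\pi_i(w,u)\cdot P_u(\cdot)$.

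The key algebraic step is to rewrite $E^i_w(\ite{x}{y}{z}\land c)$ at each world as $P_u(x)P_u(y)P_u(c) + P_u(\lnot x)P_u(z)P_u(c)$ and recognise this as $E^i_w(x\land y\land c) + E^i_w(\lnot x\land z\land c)$, while noting that $\ite{x}{c}{\bot}$ and $\ite{x}{\bot}{c}$ are exactly $x\land c$ and $\lnot x\land c$. First I would compute $P_w$ of the right-hand side by unfolding the outer if-then-else $\ite{\cond{x}{c}_i}{\cdots}{\cdots}$ via the semantics of the if-then-else operator, giving a combination $P_w(\cond{x}{c}_i)\cdot P_w(\cond{y}{x\land c}_i) + (1-P_w(\cond{x}{c}_i))\cdot P_w(\cond{z}{\lnot x\land c}_i)$. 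Then I would substitute the Kolmogorov-style quotients for each conditional term and simplify: the factor $E^i_w(x\land c)$ in the first summand should cancel against the denominator of $\cond{x}{c}_i$, and similarly $E^i_w(\lnot x\land c)$ in the second summand, leaving precisely $\bigl(E^i_w(x\land y\land c)+E^i_w(\lnot x\land z\land c)\bigr)/E^i_w(c)$, which is $P_w(\cond{\ite{x}{y}{z}}{c}_i)$.

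The main obstacle I anticipate is the careful handling of the degenerate cases where one or more of the relevant expectations vanish, since the operator is defined piecewise (value $1$ when the conditioning expectation is $0$). In particular, when $E^i_w(x\land c)=0$ the term $\cond{y}{\ite{x}{c}{\bot}}_i$ is assigned the value $1$ rather than a genuine quotient, so the cancellation argument above must be replaced by a direct check that both sides still agree; the same applies when $E^i_w(\lnot x\land c)=0$ or when $E^i_w(c)=0$ entirely. I would therefore conclude by enumerating these boundary cases separately, verifying that the if-then-else weighting on the right-hand side (which multiplies the spurious value $1$ by the weight $P_w(\cond{x}{c}_i)$ or its complement, each of which is itself $0$ or $1$ in the degenerate regime) still reproduces the left-hand side. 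Once the generic cancellation and the finitely many degenerate cases are checked, Definition~\ref{def:semantics} yields $\cond{\ite{x}{y}{z}}{c}_i\simeq\ite{\cond{x}{c}_i}{\cond{y}{\ite{x}{c}{\bot}}_i}{\cond{z}{\ite{x}{\bot}{c}}_i}$, which is {\bf A0}.
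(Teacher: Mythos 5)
Your proposal is correct and follows essentially the same route as the paper's proof: unfold the outer if-then-else on the right-hand side, substitute the Kolmogorov-style quotients $E^i_w(\cdot\land\cdot)/E^i_w(\cdot)$ for each conditional term, and cancel to obtain $\bigl(E^i_w(x\land y\land c)+E^i_w(\lnot x\land z\land c)\bigr)/E^i_w(c)=P_w(\cond{\ite{x}{y}{z}}{c}_i)$. You are in fact slightly more careful than the paper, whose derivation performs the cancellation without commenting on the degenerate cases $E^i_w(c)=0$, $E^i_w(x\land c)=0$, or $E^i_w(\lnot x\land c)=0$; your observation that in each such regime the vacuously assigned value $1$ is multiplied by a weight that is itself $0$ (or that both sides collapse to $1$ when $E^i_w(c)=0$) is exactly what is needed to make the argument fully rigorous.
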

\begin{proof}
We derive soundness for {\bf A0} from right to left as follows, assuming an arbitrary probability model $P = (W,\pi,f)$:
  We use the abbreviation $E_w^i(\alpha)$ from Definition~\ref{def:semantics} to allow a succinct presentation.
  \begin{eqnarray*}
  && P_w(\ite{\cond{x}{a}_i}{\cond{y}{x\land a}_i}{\cond{z}{\lnot x\land a}_i})\\ 
  &=&
    P_w(\cond{x}{a}_i).P_w(\cond{y}{x\land a}_i)+(1-P_w(\cond{x}{a}_i)).P_w(\cond{z}{\lnot x\land a}_i).
  \end{eqnarray*}
  For the first term we have:
  \begin{eqnarray*}
  && P_w(\cond{x}{a}_i).P_w(\cond{y}{x\land a}_i)\\
  &=&
    \frac{E_w^i(x\land a)}{E_w^i(a)}.
    \frac{E_w^i(y\land x\land a)}{E_w^i(x\land a)}\\
    &=&\frac{E_w^i(y\land x\land a)}{E_w^i(a)}.
  \end{eqnarray*}
  For the second term we have:
  \begin{eqnarray*}
  &&(1-P_w(\cond{x}{a}_i)).P_w(\cond{z}{\lnot x\land a}_i)\\
  &=&
    \frac{E_w^i(a)- E_w^i(x\land a)}{E_w^i(a)}.
    \frac{E_w^i(z\land a\land\lnot x)}{E_w^i(a\land\lnot x)}\\
  &=&
    \frac{E_w^i(a)- E_w^i(x\land a)}{E_w^i(a)}.
    \frac{E_w^i(z\land a\land\lnot x)}{E_W^i(a)-E_w^i(a\land x)}\\
  &=&
    \frac{E_w^i(z\land a\land\lnot x)}{E_w^i(a)}
  \end{eqnarray*}
  Putting these together we get:
  \begin{eqnarray*}
  &&P_w(\ite{\cond{x}{a}_i}{\cond{y}{x\land a}_i}{\cond{z}{\lnot x\land a}_i})\\
  &=&
    \frac{E_w^i(z\land a\land\lnot x)+E_w^i(y\land x\land a)}{E_w^i(a)}\\
  &=&
    \frac{E_w^i(\ite{x}{y}{z})}{E_w^i(a)}\\
  &=&P_w(\cond{\ite{x}{y}{z}}{a}_i)
\end{eqnarray*}
\end{proof}

\begin{lemma}\label{lem:A1Sound}
  The Axiom {\bf A1} is sound.
\end{lemma}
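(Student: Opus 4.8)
The plan is to fix an arbitrary probability model $P=(W,\pi,f)$ and an arbitrary world $w$, and to show directly from Definition~\ref{def:semantics} that $P_w(\cond{\bot}{x}_i\land\cond{x}{y}_i) = P_w(\cond{\bot}{x\lor y}_i)$; soundness with respect to $\simeq$ then follows exactly as in Lemmas~\ref{lem:treeSound} and~\ref{lem:swapSound}. Using the $\land$ abbreviation and the product form of the $\ite{}{}{}$ semantics, the left-hand side factors as $P_w(\cond{\bot}{x}_i)\cdot P_w(\cond{x}{y}_i)$, so everything reduces to computing these two conditional terms and the single conditional on the right.

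The key preliminary observation is the behaviour of $\cond{\bot}{x}_i$. Since $\bot\land x\cong\bot$ we have $E^i_w(\bot\land x)=0$ for every $w$, so the conditional clause gives $P_w(\cond{\bot}{x}_i)=0$ when $E^i_w(x)>0$ and $P_w(\cond{\bot}{x}_i)=1$ (vacuously) when $E^i_w(x)=0$. In other words $\cond{\bot}{x}_i$ acts as an indicator for the condition $E^i_w(x)=0$, and the whole argument turns on a case split on this condition.

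I would then argue as follows. If $E^i_w(x)>0$, the left-hand side is $0$; and since $P_u(x\lor y)\geq P_u(x)$ pointwise, we get $E^i_w(x\lor y)\geq E^i_w(x)>0$, so the right-hand side is $0$ as well. If instead $E^i_w(x)=0$, then because $E^i_w(x)=\sum_{u}\pi_i(w,u)\,P_u(x)$ is a sum of nonnegative terms, $P_u(x)=0$ for every $u$ in the support of $\pi_i(w)$. This pointwise vanishing forces $E^i_w(x\land y)=0$ and $E^i_w(x\lor y)=E^i_w(y)$. The left-hand side is now just $P_w(\cond{x}{y}_i)$, and a further split on whether $E^i_w(y)>0$ makes both sides equal $0$ (when $E^i_w(y)>0$) or both equal $1$ (when $E^i_w(y)=0$, where the vacuous-truth convention applies on each side).

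The step requiring the most care is this last case: one must track the vacuous-truth convention in two places simultaneously — inside $\cond{x}{y}_i$ on the left and inside $\cond{\bot}{x\lor y}_i$ on the right — and verify that the supports on which $x$ and $x\lor y$ vanish coincide, which is precisely what the pointwise vanishing of $P_u(x)$ guarantees via the identity $E^i_w(x\lor y)=E^i_w(y)$. Beyond this bookkeeping the argument is a routine case analysis rather than a genuine obstacle.
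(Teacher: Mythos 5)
Your proof is correct and follows essentially the same route as the paper's: a case analysis on whether $E^i_w(x)$ and $E^i_w(y)$ vanish, using the vacuous-truth convention to evaluate each conditional in each case. If anything, you justify more carefully than the paper the key intermediate facts (that $E^i_w(x)=0$ forces $P_u(x)=0$ on the support of $\pi_i(w)$, hence $E^i_w(x\land y)=0$ and $E^i_w(x\lor y)=E^i_w(y)$), which the paper uses implicitly.
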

\begin{proof}
  We can show that {\bf A1} is sound by considering cases:
  \begin{itemize}
     \item If $E_w^i(x) = 0$ and $E_w^i(y) = 0$, then $E_w^i(x\lor y) = 0$, so 
       $P_w(\cond{bot}{x}_i) = 1$, $P_w(\cond{x}{y}_i) = 1$ and $P_w(\cond{\bot}{x\lor y}_i) = 1$.
     \item If $E_w^i(x) = 0$ and $E_w^i(y) \neq 0$, then $E_w^i(x\lor y) \neq 0$, so 
       $P_w(\cond{x}{y}_i) = 0$ and $P_w(\cond{\bot}{x\lor y}_i) = 0$.
     \item If $E_w^i(x) \neq 0 $ and $E_w^i(y) = 0$, then $E_w^i(x\lor y) \neq 0$, so 
       $P_w(\cond{bot}{x}_i) = 0$ and $P_w(\cond{\bot}{x\lor y}) = 0$.
     \item If $E_w^i(x) \neq 0$ and $E_w^i(y) \neq 0$, then $E_w^i(x\lor y) \neq 0$, so 
       $P_w(\cond{bot}{x}_i) = 0$ and $P_w(\cond{\bot}{x\lor y}) = 0$.
  \end{itemize}
  In all cases the equality holds, so the axiom is sound.
\end{proof}

\begin{corollary}
  The system \macAx\ is sound for \mac.
\end{corollary}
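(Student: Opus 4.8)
The plan is to reduce soundness of \macAx\ to two facts: that every axiom is a valid semantic equivalence (its two sides receive the same value in every pointed probability model), and that every inference rule sends valid equivalences to valid equivalences. Granting these, a straightforward induction on the length of a derivation shows that $\alpha\simeq\beta$ implies $\alpha\cong\beta$, which is precisely soundness. Most of the work is already in place: Lemma~\ref{bacAx-sound} gives validity of the \acAx\ axioms, and the two genuinely new axioms {\bf A0} and {\bf A1} are discharged by Lemmas~\ref{lem:A0Sound} and \ref{lem:A1Sound}. So the remaining content is to check {\bf A2}, {\bf A3}, {\bf A4} and to argue the inference rules are validity-preserving in the modal setting.

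For the leftover axioms I would compute directly against Definition~\ref{def:semantics}. For {\bf A3}, since $\top\land x$ and $x$ agree at every world we have $E^i_w(\top\land x)=E^i_w(x)$, so $P_w(\cond{\top}{x}_i)$ is $1$ both when $E^i_w(x)>0$ (the ratio is $E^i_w(x)/E^i_w(x)$) and when $E^i_w(x)=0$ (the default branch), matching $P_w(\top)$. For {\bf A4}, $E^i_w(\bot)=0$ in every model, so the default branch always applies and $P_w(\cond{x}{\bot}_i)=1=P_w(\top)$. For {\bf A2}, the key observation is that $E^i_w(\bot\land x)=0$, so $\cond{\bot}{x}_i$ is \emph{crisp}: its value lies in $\{0,1\}$, equalling $1$ exactly when $E^i_w(x)=0$. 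One then verifies the if-then-else equation using this crispness; this is the one added axiom whose check is not purely mechanical, and where the boundary case $E^i_w(x)=0$ must be handled with particular care.

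The rules {\bf Trans} and {\bf Sym} preserve validity because $\cong$ is by definition an equivalence relation on formulae. The substitution rule {\bf Subs} is the step that needs real care, and I expect it to be the main obstacle: unlike the single-world independence argument used for \ac, the value of $\cond{\alpha}{\beta}_i$ at $w$ depends on the values of $\alpha$ and $\beta$ at \emph{every} world in the support of $\pi_i(w)$. I would therefore first prove a compositionality (substitution) lemma by induction on formula structure, stating that $P_w(\mu[x\backslash\gamma])$ depends on $\gamma$ only through the family of values $\{P_u(\gamma) : u\in W\}$. Since $\gamma\cong\delta$ means $P_u(\gamma)=P_u(\delta)$ at every world $u$, this lemma shows that replacing a subformula by a $\cong$-equivalent one in any context preserves value, which combined with $\alpha\cong\beta$ yields $\alpha[x\backslash\gamma]\cong\beta[x\backslash\delta]$. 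With this lemma in hand the remaining assembly is routine induction, so the two points demanding attention are the crispness/boundary bookkeeping in {\bf A2} and the upgrade of the single-world independence argument to the modal congruence property underlying {\bf Subs}.
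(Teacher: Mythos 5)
Your overall decomposition is the same as the paper's: soundness of \macAx\ is reduced to validity of each axiom (with Lemma~\ref{bacAx-sound} covering \acAx\ and Lemmas~\ref{lem:A0Sound} and~\ref{lem:A1Sound} covering \textbf{A0} and \textbf{A1}) plus preservation of validity by the rules, and the paper likewise dismisses \textbf{A2}--\textbf{A4} as immediate from the semantics. Your insistence on a genuine compositionality lemma for \textbf{Subs} is a real improvement over the paper: the paper only justifies substitution for the single-world fragment by appeal to independence, and in the modal setting the value of a context at $w$ depends on the substituted formula's values at every world in the support of $\pi_i(w)$, so the lemma you describe (that $P_w(\mu[x\backslash\gamma])$ is determined by the family $\{P_u(\gamma) : u\in W\}$) is exactly what is needed and is provable by induction on $\mu$. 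Your checks of \textbf{A3} and \textbf{A4} are correct.

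However, the step you defer for \textbf{A2} does not go through. Write $p=P_w(\cond{\bot}{x}_i)$. Your crispness observation is right: $E^i_w(\bot\land x)=0$, so $p=0$ when $E^i_w(x)>0$ and $p=1$ otherwise. But the right-hand side of \textbf{A2} evaluates, by the if-then-else semantics, to $p\cdot p+(1-p)\cdot(1-p)=p^2+(1-p)^2$, which equals $1$ for both $p=0$ and $p=1$, while the left-hand side is $p$. So the two sides disagree (value $0$ versus $1$) at every world where agent $i$ assigns $x$ positive expectation, and the equation you promise to ``verify using crispness'' is false as stated; the delicate case is $E^i_w(x)>0$, not the boundary case $E^i_w(x)=0$ you single out. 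This is arguably a defect in the axiom rather than in your strategy --- crispness does validate $\top\simeq\ite{\cond{\bot}{x}_i}{\cond{\bot}{x}_i}{\lnot\cond{\bot}{x}_i}$, which is presumably what was intended, and the paper's own one-line proof glosses over the same point --- but as a proof of the corollary for the system as printed, your verification of \textbf{A2} would fail, and you need either to repair the axiom or to record the countermodel.
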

\begin{proof}
  The nontrivial cases of {\bf A0} and {\bf A1} are presented above and the axioms 
  {\bf A2}, {\bf A3} and {\bf A4} follow immediately from the semantics.
\end{proof}

We conjecture that \macAx\ is complete for the given semantics.
%EXPLAIN WHY
\begin{conjecture}
  The system \macAx\ is complete for \mac.
\end{conjecture}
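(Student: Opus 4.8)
The plan is to lift the completeness argument for the aleatoric calculus (\cref{thm:completeness}) through the modal structure by induction on \emph{modal depth}, the maximum nesting of conditional expectation operators. At each level the strategy mirrors the propositional case: reduce both formulae to a canonical form, and then show that semantic equivalence forces the canonical forms to coincide up to the reversible transformations already proved derivable. The genuinely new work lies in producing and comparing the canonical form for the conditional expectation operators, for which the axioms {\bf A0}--{\bf A4} are the only available tools.

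First I would establish a modal normal form. Using {\bf A0} repeatedly, every $\cond{\alpha}{\beta}_i$ can be rewritten so that its first argument is atomic, since {\bf A0} peels the outermost \emph{if-then-else} off the first argument and replaces it by an \emph{if-then-else} of conditionals with strictly simpler first arguments. Conditionals whose first argument is $\top$ or $\bot$, or whose condition is $\bot$, are eliminated by {\bf A3} and {\bf A4}, while the derived duality $\cond{\lnot x}{y}_i \simeq \cond{x}{y}_i \imp \believes_i\lnot(x\land y)$ lets me restrict attention to positive atoms $x$ together with belief terms $\cond{\bot}{c}_i$, whose interaction across conditions is governed by {\bf A1} and whose two-valued character is recorded by {\bf A2}. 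Invoking the inductive hypothesis, I would normalise every condition $c$ as a lower-depth formula. The maximal surviving subterms $\cond{x}{c}_i$ then behave, inside the surrounding skeleton, exactly like fresh random variables: the recursive semantics assign each occurrence the fixed value $E^i_w(x\land c)/E^i_w(c)$ and compose it multiplicatively in precisely the form that \cref{lem:path_product} uses for an atom. I would therefore treat these subterms as new atoms and put the top-level skeleton into tree form via \cref{lem:tf} and \cref{lem:tf_same_structure}.

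The completeness argument then splits in two. The base case, modal depth $0$, is exactly \cref{thm:completeness}. For the inductive step I would observe that, unlike genuine variables, the modal atoms are \emph{not} freely independent: for a fixed agent $i$ the whole family $\{\cond{x}{c}_i\}_c$ is determined by a single distribution $\pi_i(w)$, so \mac-equivalence of two skeletons is equivalence only over the region of value-tuples the atoms can jointly assume, which is weaker than \ac-equivalence as polynomials. The reduction to \cref{thm:completeness} therefore requires first using {\bf A1}, {\bf A2} and the inductive equivalences on $x$ and $c$ to express every modal atom in terms of, and to impose every forced relation among, a set of independent generators; once this is done the residual skeletons are \ac\ formulae over genuinely free atoms and \cref{thm:completeness} applies verbatim.

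The hard part will be proving that this reduction to independent generators is not merely sound but \emph{complete} -- that {\bf A0}--{\bf A4} cut out exactly the relations holding in every model, with no leftover semantic identities. This is a realisability (canonical-model) problem: given any target tuple of conditional probabilities in the achievable region, I must use the inductive hypothesis to realise independent valuations of the lower-depth conditions at successor worlds and then choose $\pi_i(w)$ so as to hit that tuple, while separating any two normal forms not already identified by the axioms. The delicate point is the \emph{joint} realisability of all the conditionals $\cond{x}{c}_i$, the belief terms $\cond{\bot}{c}_i$, and the {\bf A1} constraints by one and the same distribution over successor worlds; establishing that the achievable region is precisely the variety defined by the axiom-derivable relations -- rather than a proper subregion carrying further hidden identities -- is the step I expect to carry the full weight of the proof.
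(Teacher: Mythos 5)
The statement you are addressing is stated in the paper only as a conjecture: the authors sketch a two-step plan (define a canonical form, then exhibit a witnessing structure separating non-equivalent canonical forms) and explicitly report that the required constructions have so far been hard to find. Your proposal is essentially a more detailed elaboration of that same plan, and, like the paper, it stops short of a proof. The normal-form half is plausible: induction on modal depth, using {\bf A0} to drive the first argument of each conditional down to an atom, {\bf A3} and {\bf A4} to discharge the trivial cases, and then treating the maximal subterms $\cond{x}{c}_i$ as fresh atoms in an \ac\ skeleton so that \cref{thm:completeness} applies. (One detail to check even here: each application of {\bf A0} makes the conditions grow, e.g.\ from $c$ to $x\land c$ and $\lnot x\land c$, so you need a well-founded measure showing the rewriting terminates and that the resulting conditions are still normalisable by the inductive hypothesis.)

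The genuine gap is the step you yourself flag as carrying ``the full weight of the proof,'' and it cannot be deferred as a technicality: it \emph{is} the open content of the conjecture. Because the modal atoms $\cond{x}{c}_i$ for a fixed $i$ are all induced by one distribution $\pi_i(w)$ together with the values of the lower-depth formulae at successor worlds, the set of jointly achievable value-tuples is a semialgebraic region cut out by polynomial \emph{inequalities} of moment-problem type, not merely the variety defined by the equational consequences of {\bf A0}--{\bf A2}. Two skeletons could therefore agree as functions on the achievable region without being equal as polynomials in your proposed generators, and $\simeq$ can only record equations, so it is not even clear that the reduction to ``independent generators plus forced relations'' exists in the form you describe. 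Until you either exhibit a canonical form together with a model construction that realises any two distinct canonical forms differently, or produce a concrete counterexample (a valid semantic identity underivable from {\bf A0}--{\bf A4}), the proposal remains a research plan rather than a proof, which is exactly the status the paper itself assigns to the claim.
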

The intuition for this conjecture is based on the correspondence between \mac\ and modal logic.
This correspondence is formalised in the following section. 
The axiomatization \macAx\ is adequate to show all basic equivalences of the calculus, 
and mirror modal reasoning in {\bf K}.
We anticipate that completeness can be shown in a similar fashion to the completeness of \acAx\ above: 
\begin{enumerate}
  \item we define a cannonical form where non-equivalent cannonical forms have a witnessing structure,
    that gives a different interpretation of each formula;
  \item we show that every formula is provably equivalent to a cannonical formula in \macAx.
\end{enumerate}
This is sufficien to show completeness, but so far the required constructions have been hard to find.

\section{Expressivity}
\newcommand{\KDn}{\ensuremath{\mathrm{K}_n}}

In this section we show that the modal aleatoric calculus generalises the modal logic $K$.
The syntax and semantics of modal logic are given over a set of atomic propositions $Q$.
The syntax of \KDn\ is given by:
$$\phi::=\ q\ |\ \phi\land\phi\ |\ \lnot\phi\ |\ \Box\phi$$
where $q\in Q$, and the operators are respectively {\em and}, {\em not}, {\em necessary}.
The semantics of \KDn\ are given with respect to an epistemic model $M = (W,R,V)$ where $W$ is the nonempty, countable set of possible worlds, 
$R\subseteq W\times W$ is the accessibility relation, and $V:Q\longrightarrow 2^W$ is an assignment of propositions to states.
We require that:
$$\begin{array}{cl}
  \textrm{1}& \forall w,u,v\in W,\ u\in R(w)\ \textrm{and}\ v\in R(w)\ \textrm{implies}\ v\in R(u)\\
  \textrm{2}& \forall w,u,v\in W,\ u\in R(w)\ \textrm{and}\ v\in R(u)\ \textrm{implies}\ v\in R(w)\\
  \textrm{3}& \forall w\in W,\ R(w)\neq\emptyset.
\end{array}$$
We describe the set of worlds $\|\alpha\|^M$ in the model $M = (W,R,V)$ that satisfy the formula $\alpha$ by induction as follows:
$$\begin{array}{lcl}
  \|q\|^M = V(q) &\ & 
    \|\alpha\land\beta\|^M = \|\alpha\|^M\cap\|\beta\|^M\\
  \|\lnot]\alpha\|^M = W-\|\alpha\|^M &\ & 
    \|\Box\alpha\|^M = \{u\in W\ |\ uR\subseteq\|\alpha\|^M\}\ \\
\end{array}$$
where $\forall u\in W$, $uR^\alpha = u^R\cap\|\alpha\|^M$, if $uR^\alpha\cap\|\alpha\|^M\neq\emptyset$ and $uR^\alpha = uR$, otherwise. 

We say \mac\ {\em generalises} $K$ if there is some map $\Lambda$ from pointed epistemic models to pointed probability models, 
and some map $\lambda$ from $K$ formulae to \mac\ formulae such that for all pointed epistemic models $M_w$, for all $K$ formulae $\phi$,
$w\in\|\phi\|^M$ if and only if $\Lambda(M_w)(\lambda(\phi)) = 1$.

We suppose that for every atomic proposition $q\in Q$, there is a unique atomic variable $x_q\in X$. Then the map $\Lambda$ is defined as follows:
Given $M = (W,R,V)$ and $w\in W$, $\Lambda(M_w) = P_w$ where $P = (W,\pi,f)$ and 
\begin{itemize}
  \item $\forall u,v\in W$, $\pi_i(u,v)>0$ if and only if $v\in uR$\footnote{We note this function is not deterministic, but this does not impact the final result.}.
  \item $\forall w\in W$, $\forall q\in Q$, $f_w(x_q)=1$ if $w\in V(q)$ and $f_w(x_q) = 0$ otherwise.
\end{itemize}
This transformation replaces the atomic propositions with variables that, at each world, are either always true or always false, 
and replaces the accessibility relation at a world $w$ with a probability distribution that is non-zero for precisely the worlds accessible from $w$.
It is clear that there is a valid probability model that satisfies these properties.

We also define the map $\lambda$ from $K$ to \mac\ with the following induction:
$$\begin{array}{rclcrcl}
  \lambda(q) &=& x_q &\qquad& \lambda(\alpha\land\beta)&=&\ite{\lambda(\alpha)}{\lambda(\beta)}{\bot}\\
  \lambda(\lnot\alpha)&=& \ite{\lambda(\alpha)}{\bot}{\top} &\qquad& \lambda(\Box\alpha) &=& \cond{\bot}{\ite{\lambda(\alpha)}{\bot}{\top}}
\end{array}$$

\begin{lemma}
  For all epistemic models $M = (W,R,V)$, for all $w\in W$, for all $K$ formula $\phi$, we have $w\in \|\phi\|^M$ if and only if $\Lambda(M_w)(\lambda(\phi)) = 1$.
\end{lemma}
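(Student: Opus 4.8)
The plan is to prove the biconditional by structural induction on the $K$ formula $\phi$, but with a strengthened induction hypothesis: I will show that for every $w\in W$ the value $\Lambda(M_w)(\lambda(\phi))$ is always either $0$ or $1$, and that it equals $1$ exactly when $w\in\|\phi\|^M$ (equivalently, it equals $0$ exactly when $w\notin\|\phi\|^M$). This strengthening is essential. Although \mac\ is genuinely probabilistic, the models produced by $\Lambda$ are engineered so that the image of every $K$ formula under $\lambda$ collapses to a Boolean value, and only by carrying the $\{0,1\}$-valuedness through the induction, at \emph{all} worlds, can the propositional and modal cases be discharged. Throughout I write $P=(W,\pi,f)$ for the probability model $\Lambda(M)$.

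For the base case $\phi=q$ we have $P_w(\lambda(q))=P_w(x_q)=f_w(x_q)$, which by the definition of $\Lambda$ is $1$ if $w\in V(q)=\|q\|^M$ and $0$ otherwise. The propositional cases are routine unfoldings of the if-then-else semantics: for conjunction, $P_w(\ite{\lambda(\alpha)}{\lambda(\beta)}{\bot})=P_w(\lambda(\alpha))\cdot P_w(\lambda(\beta))$, which by the induction hypothesis is a product of two Boolean values, hence Boolean, and equals $1$ iff $w$ lies in both $\|\alpha\|^M$ and $\|\beta\|^M$; for negation, $P_w(\ite{\lambda(\alpha)}{\bot}{\top})=1-P_w(\lambda(\alpha))$, again Boolean and equal to $1$ iff $w\notin\|\alpha\|^M$. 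These match the clauses for $\cap$ and set complement exactly.

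The crux is the modal case $\phi=\Box\alpha$, where $\lambda(\Box\alpha)=\cond{\bot}{\ite{\lambda(\alpha)}{\bot}{\top}}_i=\believes_i\lambda(\alpha)$. Here I would first observe that since $P_u(\bot\land\beta)=0$ at every world, the numerator $E^i_w(\bot\land\beta)$ of the conditional-expectation semantics vanishes, so by Definition~\ref{def:semantics} $P_w(\cond{\bot}{\beta}_i)$ equals $1$ when $E^i_w(\beta)=0$ and $0$ when $E^i_w(\beta)>0$; in either case the value is Boolean. Taking $\beta=\lnot\lambda(\alpha)$ and expanding, $E^i_w(\lnot\lambda(\alpha))=\sum_{u\in W}\pi_i(w,u)\,(1-P_u(\lambda(\alpha)))$, and by the induction hypothesis each $P_u(\lambda(\alpha))$ is $1$ on $\|\alpha\|^M$ and $0$ off it, so this sum reduces to $\sum_{u\notin\|\alpha\|^M}\pi_i(w,u)$. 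Since the definition of $\Lambda$ makes $\pi_i(w,u)>0$ hold precisely when $u\in wR$, this sum is zero exactly when $wR\subseteq\|\alpha\|^M$, i.e. exactly when $w\in\|\Box\alpha\|^M$. Hence $P_w(\lambda(\Box\alpha))=1$ iff $w\in\|\Box\alpha\|^M$, and the value is Boolean, closing the induction.

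I expect the modal case to be the only real obstacle, and within it the delicate point is the interaction with the vacuous-truth convention of the conditional operator: one must check that ``$E^i_w$ of the condition is zero'' captures universal quantification over successors rather than mere nonemptiness, which is why the argument turns on the support of $\pi_i(w)$ alone. This also explains why the nondeterminism in the choice of $\pi$ (noted in the footnote to the definition of $\Lambda$) is harmless, since only the support of each $\pi_i(w)$, fixed to be $wR$, enters. Seriality (condition~3) guarantees each $\pi_i(w)$ is a genuine, non-empty distribution, so the vacuous conventions on the two sides align cleanly; conditions~1 and~2 play no role in this particular lemma. Before running the induction I would also briefly confirm that $\Lambda(M)$ is a well-defined probability model --- any distribution positive exactly on the (countable) set $wR$ will serve.
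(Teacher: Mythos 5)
Your proposal is correct and follows essentially the same route as the paper: structural induction on $\phi$ with the strengthened hypothesis that $\Lambda(M_w)(\lambda(\psi))$ is Boolean at every world and equals $1$ exactly on $\|\psi\|^M$, with the modal case reduced to the observation that $E^i_w(\lnot\lambda(\alpha))=\sum_{u\notin\|\alpha\|^M}\pi_i(w,u)$ vanishes iff $wR\subseteq\|\alpha\|^M$. Your added remarks on the vacuous-truth convention, the irrelevance of the particular choice of $\pi$ beyond its support, and the role of seriality are sound and slightly more explicit than the paper's treatment, but do not change the argument.
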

\begin{proof}
  Given $M = (W,R,V)$ and $w\in W$, we let $P_w = \Lambda(M_w)$, where $P = (W,pi,f)$.
  We proceed by induction over the complexity of the \KDn\ formula $\phi$. 
  The induction hypothesis is that for all subformulae $\psi$ of $\phi$, for all $w\in W$, $w\in\|\psi\|^M$ implies $P_w(\psi) = 1$, and $w\notin\|\psi\|^M$ implies $P_w(\psi) = 0$.
  \begin{itemize}
    \item As the base case of the induction, we suppose $\phi = q$. Then $w\in\|\phi\|^M$ implies $w\in V(q)$ which implies $f(x_q)=1$. 
      Conversely, if $w\notin\|\phi\|^M$, then $w\notin V(q)$ so $f(x_q) = 0$.
    \item Given $\phi = \psi_1\land\psi_2$, if $w\in\|\phi\|^M$, then $w\in\|\psi_1\|\cap\|\psi_2\|$, so by the induction hypothesis,
      $P_w(\lambda(\psi)) = P_w(\lambda(\psi_2))= 1$. 
      As $\lambda(\phi) = \ite{\lambda(\psi_1)}{\lambda(\psi_2)}{\bot}$ we have $P_w(\lambda(\phi)) = P_w(\lambda(\psi_1)).P_w(\lambda(\psi_2) = 1$.
      Conversely, if $w\notin\|\psi\|^M$, then either $w\notin\|\psi_1\|^M$, or $w\notin\|\psi_2\|^M$. 
      We assume, w.l.o.g., $w\notin\|\psi_1\|$, and by the induction hypothesis, $P_w(\lambda(\psi_1) = 0$.
      Thus, $P_w(\lambda(\phi)) = P_w(\lambda(\psi_1)).P_w(\lambda(\psi_2) = 0$.
    \item Given $\phi = \lnot\psi$, we have $w\in\|\phi\|^M$ if and only if $w\notin\|\psi\|^M$. 
      If $w\in\|\psi\|^M$, then $P_w(\lambda(\psi)) = 1$, so $P_w(\ite{\lambda(\psi)}{\bot}{\top}) = 0$.
      If $w\notin\|\psi\|^M$, then $P_w(\lambda(\psi)) = 0$, so we have $P_w(\lambda(\phi)) = 1$.
    \item Given $\phi = \Box_i\psi$, then $w\in \|\phi\|^M$ if and only if for all $u\in wR_i$, $u\in\|\psi\|^M$.
      By the induction hypothesis we have for all $u\in wR_i$, $P_u(\psi) = 1$, so $P_u(\ite{\lambda(\psi)}{\bot}{\top}) = 0$.
      From the definition of $\Lambda$ we have $u\in wR_i$ if and only if $\pi_i(w,u)\neq 0$.
      Therefore, $\sum_{u\in W}\pi_i(w,u)P_u(\ite{\lambda(\psi)}{\bot}{\top}) = 0$ and
      so by Definition~\ref{semantics}, $P_w(\lambda(\phi)) = P_w(\cond{\bot}{\ite{\lambda(\psi)}{\bot}{\top}}_i) = 1$.
      Conversely, if $w\notin\|\phi\|^M$, then for some $u\in wR_i$ we have $u\notin\|\psi\|^M$. 
      By the induction hypothesis, $P_u(\lambda(\psi)) = 0$, and as $\pi_i(w,u)>0$, we have $\sum_{u\in W}\pi_i(w,u)P_u(\ite{\lambda(\psi)}{\bot}{\top}) > 0$.
      Therefore, the $P_u(\bot)$ term dominates the expression for $P_w(\lambda(\phi))$ as follows:
      $$P_w(\lambda(\phi)) = \frac{\sum_{u\in W}\pi_i(w,u).P_u(\ite{\lambda(\psi)}{\bot}{\top}).P_u(\bot)}{\sum_{u\in W}\pi_i(w,u)P_u(\ite{\lambda(\psi)}{\bot}{\top})} = 0.$$
  \end{itemize}
\end{proof}

\section{Case Study}\label{sect:case}
We present a case study using some simple actions in a dice game illustrating the potential for reasoning in AI applications.
A simple version of the game {\em pig}\footnote{https://en.wikipedia.org/wiki/Pig\_(dice\_game)} uses a four sided dice, 
and players take turns. Each turn, the player rolls the dice as many times as they like, adding the numbers the roll to their turn total. 
However, if they roll a 1, their turn total is set to 0, and their turn ends.
They can elect to stop at any time, in which case their turn total is added to their score.

\newcommand{\odd}{\ensuremath{\mathtt{odd}}}
\newcommand{\gtt}{\ensuremath{\mathtt{gt2}}}
\newcommand{\risk}{\ensuremath{\mathtt{risk}}}

To illustrate the aleatoric calculus we suppose that for our dice we have two random variables, {\tt odd} and {\tt gt2} ({\em greater than 2}).
Every roll of the dice can be seen as a sampling of these two variables: 1 is an odd number not greater than 2, and so on.
Now we suppose that there is some uncertainty to the fairness of the dice, so it is possible that there is a 70\% chance of the dice rolling a number greater than 2.
However, we consider this unlikely and only attach a 10\% likelihood to this scenario.
Finally, we suppose there is an additional random variable called \risk\ which can be used to define a policy. 
For example, we might roll again if the \risk\ variable is sampled as true.
This scenario if visualised in Figure~\ref{pig}.
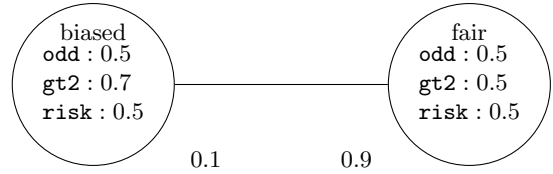
\begin{figure}
  \begin{center}
    \scalebox{1.0}{
  \begin{tikzpicture}
    \draw (0,0) node[circle,draw](loaded) {$\begin{array}{l}\odd:0.5\\ \gtt:0.7\\ \risk:0.5\end{array}$};

    \draw (0,0.7) node {biased};
    \draw (5,0) node[circle,draw](fair) {$\begin{array}{l}\odd:0.5\\\gtt:0.5\\\risk:0.5\end{array}$};
      \draw (5,0.7) node {fair};
    \draw (loaded)--(fair);
    \draw (1.5,-1) node {$0.1$};
    \draw (3.5,-1) node {$0.9$};
  \end{tikzpicture}
}
  \caption{A simple two world representation of the game pig, where the dice is possibly biased. 
    The biased world is the actual world.}\label{pig}
  \end{center}
\end{figure}

We can now build aleatoric formula describing various situations, assuming the dice is actually biased:

\noindent
\begin{tabular}{|l|l|l|c|}
  \hline
  Name & Formula & Description & Prob\\
  \hline
  {\tt bust} & $\ite{\gtt}{\bot}{\odd}$ & the probability of rolling a 1 & 0.15\\
  \hline
  {\tt four} & $\ite{\odd}{\bot}{\gtt}$ & the probability of rolling a 4 &0.35\\
  \hline
  {\tt thinkBust} & $\cond{\mathtt{bust}}{\top}$ & the chance given to rolling a 1& 0.265\\
  \hline
  {\tt think-4-1} & $\cond{\mathtt{bust}}{\mathtt{four}}$ & the chance of rolling a 1 given a 4  & 0.237\\
  \hline
  {\tt rollAgain} & $\ite{{\tt thinkBust}^{1/2}}{\mathtt{risk}}{\top}$ & whether to roll again & 0.77\\
  \hline
\end{tabular}

These formulas show the different types of information that can be represented: 
{\tt bust} and {\tt four} are true random variables (aleatoric information), 
whereas {\tt thinkBust} and {\tt think-4-1} are based on an agent's mental model (Bayesian information).
Finally {\tt rollAgain} describes the condition for a policy. 
In a dynamic extension of this calculus, given prior assumptions about policies, 
agents may apply Bayesian conditioning to learn probability distributions from observations.

\section{Conclusion}\label{sect:conclusion}

The modal aleatoric calculus is shown to be a true generalisation of modal logic, 
but gives a much richer language that encapsulates probabilistic reasoning and degrees of belief.
We have shown that the modal aleatoric calculus is able to describe probabilistic strategies 
for agents. 
We have provided a sound axiomatization for the calculus, shown it is complete for the aleatoric calculus
and we are working to show that the axiomatization is complete for the modal aleatoric calculus.
Future work will consider dynamic variations of the logic, 
where agents apply Bayesian conditioning based on their observations to learn the probability distribution of worlds.

%\label{sect:bib}
%\bibliographystyle{splncs04}
\providecommand{\noopsort}[1]{}

\end{document}